\theoremstyle{definition}
\newcommand{\real}{\mathbb{R}}
\newcommand{\p}{\mathbf{P}}
\newcommand{\tp}{\tilde{\mathbf{P}}}
\newcommand{\pr}{\mathcal{P}}
\newcommand{\q}{\mathbf{Q}}
\newcommand{\bb}{\mathbf{B}}
\newcommand{\vv}{\mathbf{V}}
\newcommand{\ww}{\mathbf{W}}
\newtheorem{lemma}{Lemma}
\newtheorem{proposition}{Proposition}
\newtheorem{theorem}{Theorem}
\newtheorem{corollary}{Corollary}
\newtheorem{definition}{Definition}
\newtheorem{example}{Example}
\newtheorem{exer            cise}[theorem]{Exercise}
\newtheorem*{lem:main}{Lemma \ref{lem:main}}
\newtheorem*{cor:int}{Corollary \ref{cor:int}}
\DeclareMathOperator{\var}{Var}
\DeclareMathOperator{\cav}{cav}
\DeclareMathOperator{\pol}{Pol}
\DeclareMathOperator{\mn}{MN}
\DeclareMathOperator{\bin}{Bin}
\begin{document}
	
	\title{A Population's Feasible Posterior Beliefs\footnote{Itai acknowledges support from the Ministry of Science and Technology
			(grant 19400214). Yakov acknowledges support from the Israel Science Foundation (grant 2021296).}}
	\author{Itai Arieli\thanks{%
			Faculty of Industrial Engineering and Management, Technion: The Israel Institute of Technology,
			iarieli@technion.ac.il}, Yakov Babichenko\thanks{%
			Faculty of Industrial Engineering and Management, Technion: The Israel Institute of Technology,
			yakovbab@technion.ac.il}}
	\maketitle
	\begin{abstract}
		We consider a population of Bayesian agents who share a common prior over some finite state space and each agent is exposed to some information about the state.
		We ask which distributions over empirical distributions of posteriors beliefs in the population are feasible. 
		We provide a necessary and sufficient condition for feasibility. We apply this result in several domains. 
		First, we study the problem of maximizing the polarization of beliefs in a population.
		Second, we provide a characterization of the feasible agent-symmetric product distributions of posteriors.
		Finally, we study an instance of a private Bayesian persuasion problem and provide a clean formula for the sender's optimal value.  
	\end{abstract}
	\section{Introduction}
	Economics is deeply concerned with the question of whether agents' behaviors can be rationalized in the sense that they can be explained as an equilibrium behavior in a Bayesian model with rational agents. One such related question is under what conditions agents’ beliefs, i.e., posterior distribution, are compatible with Bayes’s rule?  This question is well understood for the single agent where by the splitting lemma (\citep{aumann1995repeated, blackwell1951comparison}) feasibility simply means that the expectation of the posterior equals the prior. But the characterization of feasible posterior distribution for two or more agents is trickier.  
	
	The first paper to address the feasibility of posterior distributions for more than one agent is  \citep{dawid1995coherent} who characterize feasible distributions for two agents. 
	Recently a growing number of papers have examined the feasibility of posterior distributions and their connection to information design problems. \citep{arieli2021feasible} connect the notion of feasibility to the agreement theorem of \citep{aumann1976agreeing} and to the related no-trade theorem of \citep{milgrom1982information}. A follow-up paper by \citep{morris2020notrade} generalizes the result in \citep{arieli2021feasible} to an arbitrary finite state space. Despite these elegant connections, in general, determining whether a given distribution of posterior beliefs is feasible is a tough question.
	
	In this work we study an \emph{anonymous}  variant of the feasibility question. The object whose feasibility we are willing to determine is a distribution over the \emph{empirical distributions of posterior beliefs in the population} rather then a distribution over \emph{profiles of posteriors}. The difference between these two variants is that a profile of posteriors indicates which agent holds which posterior, while the anonymous version of the problem only reveals how many agents hold each posterior without revealing their identity.

	
	
	Surprisingly, we show that determining feasibility for the anonymous variant of the problem is relatively easy, especially in the case where the number of different posteriors for an individual in the population is low. 
	Our characterization (Theorem \ref{theorem:main}) provides a necessary and sufficient condition for  a distribution over empirical samples of a population to be feasible. The characterization relies on the observation that conditional on a state the frequencies of posterior beliefs in the population can be identified. We show that every combination of spreads of these expected frequencies (one for each state) is feasible.
	Unlike the feasibility characterization by \citep{arieli2021feasible} and \citep{morris2020notrade}, our characterization does not become more complex as the number of agents in the population increases. Furthermore, the complexity of  our characterization does not increase with the number of states either. However, the parameter of the number of possible posteriors in the population has an effect on the complexity of our characterization.
	

	As was mentioned above, the feasibility problem is an intriguing instance of rationalization in the case where an econometrician observes the distribution over empirical distributions of posteriors in the population. In the context of rationalization, it is natural to consider another variant where the econometrician observes a single sample; i.e., she observes a single empirical distribution of posteriors and wants to determine whether it is possible that the agents are Bayesian.\footnote{\citep{shmaya2016experiments,de2019rationalizing} have studied Bayesian rationalization of the behavior of a single agent in a repeated environment.} 
\citep{molavi2021tests} shows that every empirical distribution of posteriors can be rationalized.
This question is equivalent to understanding which empirical distributions of posteriors might belong to the support of a feasible distribution. Our results can be viewed as an extension of \citep{molavi2021tests}; however, we go a step further and determine the entire set of feasible distributions rather than focus on points that might belong to the support.   


We apply our main theorem in three different domains.

\paragraph{Polarization} The question of how polarized the posteriors of two Bayesian agents might be has been previously studied. 
Different measures of polarization have been suggested, including the distance between the posteriors \citep{dubins1980maximal}, the distance to the power $\alpha$ \citep{arieli2021feasible}, and the indicator function of the posterior's distance being above some threshold \citep{burdzy2019contradictory,burdzy2020bounds}.    
The case of a quadratic distance can be naturally generalized to multiple agents by considering the variance of the belief's in the population. The maximal polarization for two agents is obtained by revealing the state to one agent and providing no information to the other agent (\citep{arieli2021feasible}). We extend this result to every population of agents of even size: the maximal polarization is obtained by revealing the state to half of the population and providing no information to the other half. Furthermore, we extend this result to an arbitrary state space (not necessarily binary).

\paragraph{Private private information} A fascinating instantiation of the feasibility problem is that of product distributions. If the distribution of belief profiles is a product distribution then the information that agent $i$ receives does not affect her belief about the posteriors of her peers. \citep{privateprivate} refer to this property as \emph{private private information} of the agents. The feasibility of product distributions has been considered in \citep{privateprivate,arieli2021feasible,gutmann} and we currently have a good understanding of this problem for the two-agent case. However, the feasibility of product distributions for more then two agents currently does not have a characterization. We contribute to this line of literature by providing a characterization of feasible \emph{agent-symmetric} product distributions for an arbitrary number of agents. This characterization takes a simple form in the case where the marginals of the product distributions are binary-supported. 

\paragraph{Private Bayesian Persuasion}
The Bayesian persuasion setting is quite well understood in the case where the sender communicates with the receivers publicly. In particular the value of the sender is characterized by the notion of concavification \citep{kamenica2011bayesian,aumann1995repeated}. However, if the sender is allowed to communicate privately with each receiver, the characterization of the value is known only in special cases. A particular case that has been studied in \citep{arieli2019private,barman,dughmi2019algorithmic} is the following. A marketer (sender) tries to persuade agents (receivers) to adopt a product whose quality is either high or low.
\citep{arieli2019private,barman,dughmi2019algorithmic} characterize optimal policies for the sender in special cases of sender's utility function such as supermodular or anonymous submodular utilities. We provide an additional natural instance of the problem where the optimal sender's value (and the optimal policy) can be characterized. We consider the case with a homogeneous population of agents (i.e., all agents have an identical utility function) and the sender's utility function is a monotonic anonymous function (i.e., a function of a fraction of the adopters). Interestingly, similarly to the standard Bayesian persuasion model (\citep{kamenica2011bayesian}), our characterization also relies on the notion of concavification. However, the concavification in our case is applied in a different domain.

We illustrate the key ideas of our main result via an example in Section \ref{sec:warm}. Section \ref{sec:main} proceeds with the model and the main result.  Section \ref{sec:polar} studies the implications of our results to maximal polarization of beliefs in a population. In Section \ref{sec:vector} we discuss the connections of our anonymous version of feasibility with the previously studied (non-anonymous) version and present an application to private private information. Finally, Section \ref{sec:pbp} demonstrates an application in a Bayesian persuasion framework.

\section{Warm-up Example}\label{sec:warm}

Let $\Omega=\{0,1\}$ be a binary state space with a common prior $\mu=\frac{1}{2}$. We identify $\Delta(\Omega)$ with $[0,1]$, which indicates the probability that is assigned to $\omega=1$. An information structure for $n$ agents comprises $n$ measurable sets $\{S_i\}_{i=1}^n$ as well as a probability distribution $G\in\Delta(\Omega\times S_1\times\cdots\times S_n)$. We note that almost surely any realization of $n$ signals $(s_1,\ldots,s_n)$ defines $n$ posterior beliefs $G(\omega|s_i)$ for $i\in [n]$. In the example of the present section we assume that $G(\omega|s_i)\in \{ \frac{1}{4},\frac{3}{4} \}$ almost surely for every $i\in [n]$; namely, any information that an agent is exposed to results in a posterior of either $\frac{1}{4}$ or $\frac{3}{4}$.
The information structure $G$ and the signal realization $s=(s_1,\ldots,s_n)$ induce a \emph{distribution of posteriors in the population}. Since we have assumed that there are only two possible posteriors we can identify the distribution of posteriors in the population with a fraction $H_s\in \{0,\frac{1}{n},\frac{2}{n},...,1\}$, where $H_s=\frac{k}{n}$ indicates that $k$ agents in the population have the high posterior $\frac{3}{4}$, and $n-k$ agents have the low posterior $\frac{1}{4}$. 


The information structure $G$ induces a distribution over distributions of posteriors in the population $\p_G\in\Delta(\{0,\frac{1}{n},\frac{2}{n},...,1\})$. We will call such a $\p_G$ the \emph{population's empirical posteriors} for short. The population's empirical posteriors anonymize the identity of the individuals and considers only the empirical distributions of the belief distribution.

A population's empirical posteriors $\p$ is called \emph{feasible} if $\p=\p_G$ for some information structure of $n$ agents. A characterization of the feasible population's empirical posteriors is given in the following proposition. We denote by $\delta_{c}$ the Dirac measure on an element $c$.

\begin{proposition}\label{pro:ex}
	A population's empirical posteriors $\p\in\Delta(\{0,\frac{1}{n},\frac{2}{n},...,1\})$ is feasible if and only if $\p$ is a mean-preserving spread of the distribution $\frac{1}{2}\delta_{\frac{1}{4}}+\frac{1}{2}\delta_{\frac{3}{4}}$.
\end{proposition}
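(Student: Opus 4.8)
The plan is to prove both directions by exploiting the binary structure, which reduces everything to one-dimensional mean-preserving spreads. First consider necessity. Suppose $\p = \p_G$ for some information structure $G$. Condition on the state $\omega$. Given $\omega = 1$, each agent $i$ is (almost surely) either of the high type (posterior $\tfrac34$) or the low type (posterior $\tfrac14$); let $p^1_i = G(s_i \text{ gives posterior } \tfrac34 \mid \omega=1)$. By Bayes' rule applied to the single agent $i$, the unconditional probability that $i$ holds the high posterior equals $\tfrac34 p^1_i + \tfrac14(1-p^1_i)$ on one hand, and equals (by the splitting lemma for one agent, i.e.\ the martingale property) something I can pin down: writing $q_i$ for the unconditional probability that $i$ is a high type, the martingale identity $\tfrac12 = \tfrac34 q_i + \tfrac14(1-q_i)$ forces $q_i = \tfrac12$ — wait, more carefully, the relevant computation is $P(\omega=1) = \sum_{\text{types}} P(\text{type})\,G(\omega=1\mid\text{type})$, giving $\tfrac12 = \tfrac34 q_i + \tfrac14(1-q_i)$, hence $q_i = \tfrac12$, and moreover $P(\omega=1\mid \text{high}) = \tfrac34$ combined with Bayes gives $p^1_i = P(\text{high}\mid\omega=1) = \tfrac{P(\text{high})G(\omega=1\mid\text{high})}{P(\omega=1)} = \tfrac{(1/2)(3/4)}{1/2} = \tfrac34$. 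Similarly $p^0_i := P(\text{high}\mid\omega=0) = \tfrac{(1/2)(1/4)}{1/2} = \tfrac14$. So conditional on $\omega=1$ the expected fraction of high types is $\tfrac34$, and conditional on $\omega=0$ it is $\tfrac14$. Therefore $\mathbb{E}[H_s \mid \omega=1] = \tfrac34$ and $\mathbb{E}[H_s\mid\omega=0] = \tfrac14$, so $\p = \tfrac12 \p^0 + \tfrac12 \p^1$ where $\p^\omega$ is the conditional law of $H_s$ given $\omega$, each $\p^\omega$ has mean $\tfrac{3-2\omega \cdot(-1)}{4}$ — i.e.\ $\p^1$ has mean $\tfrac34$, $\p^0$ has mean $\tfrac14$. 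Hence $\p$ is a mixture of a mean-$\tfrac14$ distribution and a mean-$\tfrac34$ distribution with weights $\tfrac12,\tfrac12$, which is exactly the statement that $\p$ is a mean-preserving spread of $\tfrac12\delta_{1/4} + \tfrac12\delta_{3/4}$ (using that on $[0,1]$ a distribution is an MPS of $\tfrac12\delta_a+\tfrac12\delta_b$ iff it splits as a $\tfrac12/\tfrac12$ mixture of distributions with means $a$ and $b$; I should cite or quickly justify this standard fact).

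For sufficiency, suppose $\p$ is a mean-preserving spread of $\tfrac12\delta_{1/4}+\tfrac12\delta_{3/4}$; I must build an information structure $G$ realizing it. By the characterization just recalled, write $\p = \tfrac12 \p^0 + \tfrac12 \p^1$ with $\mathrm{mean}(\p^0) = \tfrac14$, $\mathrm{mean}(\p^1)=\tfrac34$, both supported on $\{0,\tfrac1n,\dots,1\}$. The construction: the state $\omega$ is drawn with $P(\omega=1)=\tfrac12$. Conditional on $\omega$, draw a fraction $k/n$ from $\p^\omega$, then choose uniformly at random a set of $k$ agents to be "high" and the remaining $n-k$ to be "low"; give a high agent the signal "H" and a low agent the signal "L". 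I then need to check that, under this $G$, the posterior of an agent receiving "H" is exactly $\tfrac34$ (and "L" gives $\tfrac14$), which amounts to verifying $P(\omega=1\mid s_i=\text{H}) = \tfrac34$. By symmetry of the random assignment, $P(s_i = \text{H}\mid \omega) = \mathbb{E}_{\p^\omega}[H_s] = \mathrm{mean}(\p^\omega)$, which is $\tfrac34$ if $\omega=1$ and $\tfrac14$ if $\omega=0$; then Bayes gives $P(\omega=1\mid s_i=\text{H}) = \tfrac{(1/2)(3/4)}{(1/2)(3/4)+(1/2)(1/4)} = \tfrac34$ as required, and likewise for "L". Finally, by construction the induced empirical-posterior distribution is $\tfrac12\p^0 + \tfrac12\p^1 = \p$.

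The main obstacle — really the only non-routine point — is the clean equivalence "$\p$ is an MPS of $\tfrac12\delta_{1/4}+\tfrac12\delta_{3/4}$ $\iff$ $\p$ decomposes as a $\tfrac12/\tfrac12$ mixture of distributions on $[0,1]$ with means $\tfrac14$ and $\tfrac34$." One direction (mixture $\Rightarrow$ MPS) is immediate since an MPS of $\delta_a$ is just any distribution with mean $a$, and MPS is preserved under mixtures. The converse needs an argument: given an MPS $\p$ of the two-point distribution, one produces the dilation (Strassen-type coupling) sending mass $\tfrac12$ at $\tfrac14$ to some sub-probability kernel and mass $\tfrac12$ at $\tfrac34$ to another, each mean-preserving; concretely one can use that the two-point distribution's "potential function" (the integral of the CDF) lies below that of $\p$ and construct the split explicitly, or invoke the standard fact that a mean-preserving spread of a finite mixture can be realized componentwise. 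I expect to state this as a small lemma and give the short dilation argument; everything else is bookkeeping with Bayes' rule and the exchangeable random assignment of labels.
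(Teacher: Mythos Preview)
Your proof is correct and follows essentially the same route as the paper's: for necessity, condition on the state and use Bayes' rule to obtain the $\tfrac12/\tfrac12$ decomposition $\p=\tfrac12\p^0+\tfrac12\p^1$ with means $\tfrac14,\tfrac34$; for sufficiency, use exactly the same random-subset signaling construction (draw $k$ from $\p^\omega$, pick a uniformly random size-$k$ set to receive the high signal) and verify the posteriors via Bayes. You are in fact slightly more careful than the paper in flagging the equivalence ``$\p$ is an MPS of $\tfrac12\delta_{1/4}+\tfrac12\delta_{3/4}$ $\iff$ $\p$ decomposes as a $\tfrac12/\tfrac12$ mixture with those means'' as a lemma requiring a Strassen/dilation argument; the paper treats this as understood and simply starts the sufficiency direction from such a decomposition.
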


To demonstrate the applicability of Proposition \ref{pro:ex} consider, for instance, a population of $n=9$ agents. The uniform distribution $U(\{0,\frac{1}{9},\frac{2}{9},...,1\})$ is feasible because $U(\{0,\frac{1}{9},\frac{2}{9},...,1\})$ is a mean-preserving spread of $\frac{1}{2}\delta_{\frac{1}{4}}+\frac{1}{2}\delta_{\frac{3}{4}} = U(\{\frac{1}{4},\frac{3}{4}\})$; that is, there exists an information structure such that with probability $\frac{1}{10}$ the distribution of posteriors in the population will be $i$ agents with posterior $\frac{3}{4}$ and $9-i$ agents with posterior $\frac{1}{4}$ for every $i=0,1,...,9$.

The uniform distribution $U(\{\frac{1}{9},\frac{2}{9},...,\frac{8}{9}\})$ is infeasible because $U(\{\frac{1}{9},\frac{2}{9},...,\frac{8}{9}\})$ is not a mean-preserving spread of $\frac{1}{2}\delta_{\frac{1}{4}}+\frac{1}{2}\delta_{\frac{3}{4}} = U(\{\frac{1}{4},\frac{3}{4}\})$; that is, there is no information structure such that with probability $\frac{1}{8}$ the distribution of posteriors in the population will be $i$ agents with posterior $\frac{3}{4}$ and $9-i$ agents with posterior $\frac{1}{4}$ for every $i=1,2,...,8$.

Proposition \ref{pro:ex} is a special case of our main result, which characterizes the feasibility of an arbitrary distribution. We find it useful to present the proof of Proposition \ref{pro:ex} because it contains the main proof ideas of the general case (Theorem \ref{theorem:main}).

\begin{proof}[Proof of Proposition \ref{pro:ex}]
	
	First, we show that every feasible distribution must be a mean-preserving spread of $\frac{1}{2}\delta_{\frac{1}{4}}+\frac{1}{2}\delta_{\frac{3}{4}}$. Conditional on state $\omega=1$, every agent $i\in [n]$ has a posterior of $\frac{3}{4}$ with probability $\frac{3}{4}$. Therefore, conditional on state $\omega=1$, the expected number of agents with the high posterior $\frac{3}{4}$ is $\frac{3}{4}$. Similarly, conditional on state $\omega=0$, the expected number of agents with the high posterior $\frac{3}{4}$ is $\frac{1}{4}$. Denoting by $\p^\omega$ the population's posteriors conditional on $\omega$ we obtain a decomposition of $\p$ as $\p=\frac{1}{2}\p^0+\frac{1}{2}\p^1$ where $\mathbb{E}[\p^0]=\frac{1}{4}$ and $\mathbb{E}[\p^1]=\frac{3}{4}$. We refer to $\p^0$ as the spread of $\delta\frac{1}{4}$, we refer to $\p^1$ as the spread of $\delta\frac{3}{4}$, and we get that $\p$ is a mean-preserving spread of $\frac{1}{2}\delta_{\frac{1}{4}}+\frac{1}{2}\delta_{\frac{3}{4}}$.
	
	Conversely, let $\p=\frac{1}{2}\q^0+\frac{1}{2}\q^1$ where $\mathbb{E}[\q^0]=\frac{1}{4}$ and $\mathbb{E}[\q^1]=\frac{3}{4}$ be a mean-preserving spread of $\frac{1}{2}\delta_{\frac{1}{4}}+\frac{1}{2}\delta_{\frac{3}{4}}$. Since $\p\in \Delta(\{0,\frac{1}{n},...,1\})$ we also have $\q^\omega \in \Delta(\{0,\frac{1}{n},...,1\})$. 
	Denote $\alpha^\omega_k=\q^\omega(\{\frac{k}{n}\})$.
	We define an information structure $G$ as follows. In state $\omega$ we draw an integer $k$ with probability $\alpha^\omega_k$, then we choose a subset $A\subset [n]$ of size $|A|=k$ uniformly at random (among all possible subsets of size $k$), and finally we send a high signal $h$ to all agents in the subset $A$ and a low signal $l$ to the remaining agents. We need to show that the posterior of an agent who has observed a signal $h$ is $\frac{3}{4}$ and the posterior of an agent who has observed a signal $l$ is $\frac{1}{4}$. In state $\omega$ the signal $h$ is obtained with probability $\sum_{i=0}^n \alpha^\omega_k \frac{k}{n}=\mathbb{E}[\q^\omega]$ because $\alpha^\omega_k$ is the probability that the subset is of size $k$ and since the subset is random the probability that agent $i$ is included is $\frac{k}{n}$. Therefore, the posterior of an agent who has observed signal $h$ equals 
	$$\frac{\mathbb{E}[\q^1]}{\mathbb{E}[\q^0]+\mathbb{E}[\q^1]}=\frac{3}{4}.$$
	Similar considerations show that the posterior of an agent who has observed a signal $l$ is $\frac{1}{4}$.
\end{proof}




\section{Model and the Main Result}\label{sec:main}

It turns out that the characterization of feasibility from Section \ref{sec:warm} can be naturally extended to a much more general case. Neither the restriction to binary possible posteriors nor the restriction to a binary state space are needed. We start by introducing the general model.  

For a set $A$ we denote by $\Delta_n(A)\subset \Delta(A)$ the set of empirical samples of size $n$, i.e., $\Delta_n(A)=\{\sum_{k\in [n]} \frac{1}{n}\delta_{a^k}:a^k\in A \text{ for } k\in [n]\}$ (note that any $a\in A$ may appear in the sum more than once). 

Let $\Omega=\{1,\ldots,m\}$ 
be a finite state space. An information structure for $n$ agents comprises $n$ measurable sets $\{S_j\}_{j=1}^n$ as well as a probability distribution $G\in\Delta(\Omega\times S_1\times\cdots\times S_n)$. We denote by $\mathcal{G}_n$ the set of all information structures of $n$ agents.
The marginal distribution on $\Omega$ forms the prior; i.e., the prior $\mu\in \Delta(\Omega)$ satisfies $\mu_i=G(\omega=i)$ for any state $\omega\in\Omega$. Any realization of $n$ signals $(s_1,\ldots,s_n)$ defines almost surely $n$ posterior beliefs $G(\omega|s_j)$ for $j\in [n]$. Hence $G$ and the signal realization $s=(s_1,\ldots,s_n)$ induce a \emph{distribution of posteriors in the population} $H_s\in\Delta_n(\Delta(\Omega))$ that is given by $\sum_{j\in [n]} \frac{1}{n}\delta_{G(\omega|s_j)}$.

The information structure $G$ induces a distribution over distributions of posteriors in the population $\p_G\in\Delta(\Delta_n(\Delta(\Omega)))$, which we call the \emph{population's empirical posteriors} for short.

In the case where $S_j$ is finite for every $j$, we let $G(s)$ be the unconditional probability of a signal vector 
$s\in S_1\times\cdots\times S_n$. In this case we can write
\begin{equation}\label{eq:pg}
	\p_G=\sum_{s\in S_1\times\cdots\times S_n}G(s)\delta_{H_s},
\end{equation}
where $\delta_{H_s}\in\Delta_n(\Delta(\Omega))$ is a Dirac measure on $H_s$.

\begin{definition}
	A distribution $\p\in\Delta(\Delta_n(\Delta(\Omega)))$ is called \emph{feasible} for a prior $\mu\in\Delta(\Omega)$ if $\p=\p_G$ for some information structure $G\in \mathcal{G}_n$. We denote by $\mathcal{P}_n(\mu)\subseteq \Delta(\Delta_n(\Delta(\Omega)))$ the set of all feasible distributions for a prior $\mu\in\Delta(\Omega)$ and $n$ agents.
\end{definition}


Note that if the signal sets  $\{S_j\}_{j=1}^n$ are finite, $\p_G$ assigns probability one to measures $H_s\in\Delta_n(\Delta(\Omega))$ that are supported on some finite set $\{y_1,\ldots,y_m\}\subseteq \Delta(\Omega)$ (where $m\leq \sum_j |S_j|$). In general, we say that $\p\in\mathcal{P}$ is \emph{finitely supported} if there exists a finite set $Y\subseteq\Delta(\Omega)$ of posteriors such that $\p$ assigns probability one to the set of measures $\Delta(\Delta_n(Y))$. 

\subsection{Main Result}
For any $\p\in\Delta(\Delta_n(\Delta(\Omega)))$ we let 
$\tp\in\Delta(\Delta(\Omega))$ be the \emph{expected} measure that $\p$ induces on $\Delta(\Omega)$. That is, for every Borel subset $B\subseteq\Delta(\Omega)$,
$$\tp(B)=\int_{\Delta(\Omega)}H(B) \mathrm d\p(H).$$
In particular, if $\p$ is finitely supported we can write $\p=\sum_{j=1}^k a_j H_j$ (where $H_j\in\Delta_n(\Delta(\Omega))$, $a_j\geq 0$ and $\sum_j a_j=1$); then 
$$\tp(B)=\sum_{j=1}^k a_jH_j(B).$$
We define $m$ other distributions $\tp^\omega \in \Delta(\Delta(\Omega))$ for $\omega\in\Omega$ as follows:
$$\mathrm d\tp^\omega(x)=\frac{1}{\mu_\omega} x_\omega \mathrm d\tp(x).$$
If $\p$ is finitely supported we simply have $\tp^\omega(x)=\frac{x_\omega \tp(x)}{\mu_\omega}$.

Finally, let $\p'\in\Delta((\Delta(\Delta(\Omega)))$
be the the measure that is defined as follows:
$$\p'=\sum_{\omega \in[m]}\mu_\omega \delta_{\tp^\omega},$$
where $\delta_{\tp^\omega}\in\Delta((\Delta(\Delta(\Omega)))$ are Dirac measures. 
Our main result states the following:
\begin{theorem}\label{theorem:main}
	A distribution $\p\in\Delta(\Delta_n(\Delta(\Omega)))$ is feasible for $n$ agents and a prior $
	\mu\in\Delta(\Omega)$ if and only if $\p$ is a mean-preserving spread of $\p'$.
\end{theorem}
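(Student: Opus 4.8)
The plan is to generalize the two-step argument from the proof of Proposition~\ref{pro:ex}. The necessity direction should go as follows. Given a feasible $\p = \p_G$, condition on each state $\omega$. Conditional on $\omega$, each agent $j$ draws a signal $s_j$ and forms a posterior $G(\cdot\mid s_j)$; let $\p^\omega := \p_{G}(\cdot\mid\omega)$ be the resulting distribution over empirical distributions of posteriors. Since $G(\omega=\omega)=\mu_\omega$, we have the decomposition $\p = \sum_{\omega}\mu_\omega \p^\omega$. The key observation is that the expected posterior measure induced by $\p^\omega$ is precisely $\tp^\omega$: for any Borel set $B$, the probability that a random agent's posterior lies in $B$ conditional on $\omega$ equals, by Bayes' rule, $\int_B \frac{x_\omega}{\mu_\omega}\,\dd\tp(x)$, which is $\tp^\omega(B)$. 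Hence $\p$ is obtained from $\p' = \sum_\omega \mu_\omega\delta_{\tp^\omega}$ by replacing each atom $\delta_{\tp^\omega}$ with a distribution $\p^\omega$ whose barycenter (in $\Delta(\Delta(\Omega))$) is $\tp^\omega$; that is exactly the statement that $\p$ is a mean-preserving spread of $\p'$.

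For sufficiency, suppose $\p = \sum_\omega \mu_\omega \q^\omega$ is a mean-preserving spread of $\p'$, so each $\q^\omega\in\Delta(\Delta_n(\Delta(\Omega)))$ has expected measure $\tp^\omega$. I would build an information structure $G$ directly: in state $\omega$, draw an empirical distribution $H\in\Delta_n(\Delta(\Omega))$ according to $\q^\omega$; write $H = \frac1n\sum_{k=1}^n \delta_{y^k}$; then assign the $n$ ``posterior labels'' $y^1,\dots,y^n$ to the $n$ agents by a uniformly random permutation, and send agent $j$ the signal equal to the label assigned to her. The signal space $S_j$ is thus the (countable, if $\p$ is finitely supported) set of possible posterior vectors appearing in the supports. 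It remains to verify that an agent who receives signal $y\in\Delta(\Omega)$ indeed has posterior $y$. Conditional on $\omega$, by symmetry of the random permutation the probability that agent $j$ receives label $y$ is the expected frequency of $y$ under $\q^\omega$, i.e.\ (the density at $y$ of) $\tp^\omega$, which by definition equals $\frac{1}{\mu_\omega}y_\omega\,\tp(y)$. Applying Bayes' rule across states, the posterior weight on $\omega$ given signal $y$ is $\mu_\omega \cdot \frac{1}{\mu_\omega} y_\omega \tp(y) \big/ \sum_{\omega'}\mu_{\omega'}\cdot\frac{1}{\mu_{\omega'}}y_{\omega'}\tp(y) = y_\omega$, as required; then $\p_G = \sum_\omega\mu_\omega\q^\omega = \p$.

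The main obstacle is the measure-theoretic bookkeeping in the non-finitely-supported case: making sense of ``the density of $\tp^\omega$ at $y$'' and of the per-agent signal distribution requires working with the disintegration of $\tp$ rather than point masses, and one has to check that the random-permutation construction yields a well-defined element of $\Delta(\Omega\times S_1\times\cdots\times S_n)$ with the right conditional posteriors almost surely. I would handle this by first proving the theorem for finitely supported $\p$ (where the computation is exactly as in Proposition~\ref{pro:ex}, just with $m$ states and $k$ posteriors), and then passing to the general case by a weak-$*$ approximation argument, using that both the feasibility property and the mean-preserving-spread relation are closed under the appropriate limits. A secondary point to be careful about is that a mean-preserving spread of $\p'$ must be supported on $\Delta_n(\Delta(\Omega))$ (empirical distributions of exactly $n$ points); since $\p$ is assumed to lie in $\Delta(\Delta_n(\Delta(\Omega)))$ this is automatic, but it is worth noting that the spread is taken within that constrained domain, which is why the characterization genuinely depends on $n$ through the geometry of $\Delta_n(\Delta(\Omega))$.
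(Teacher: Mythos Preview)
Your proposal is correct and follows essentially the same route as the paper: the random-permutation construction for sufficiency with the Bayes-rule verification that a signal $y$ induces posterior $y$, and the state-by-state decomposition $\p=\sum_\omega \mu_\omega \p^\omega$ together with the identification $\mathbb{E}[\p^\omega]=\tp^\omega$ for necessity. The only organizational difference is that the paper handles the general (non-finitely-supported) sufficiency case directly via the Radon--Nikodym derivative $\frac{\dd\tp^\omega}{\dd\tp}(x)=x_\omega/\mu_\omega$ rather than by approximation, reserving the weak-$*$ limit argument for the necessity direction only.
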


The proof follows a similar logic to the proof of Proposition \ref{pro:ex}; however, the general case differs from the binary-posterior case in the following respect. A key observation in the proof of Theorem \ref{theorem:main}, as well as the proof of Proposition \ref{pro:ex}, is that the expected distribution of posteriors conditional on a state $\omega$ can be identified. In the binary-posterior case, conditional on any state $\omega$, \emph{every individual} in the population has on average the high posterior ($\frac{3}{4}$ in Proposition \ref{pro:ex}) with a fixed probability that is determined by the two possible posteriors. In the case of more than two posteriors we cannot deduce such a strong statement about each individual in the population.\footnote{Consider, for instance, a binary-state space with three possible posteriors $0<a<b<\mu<c<0$. Agent 1 has only two possible posteriors $a$ and $c$. Agent 2 has only two possible posteriors $b$ and $c$. Obviously, agents 1 and 2 will have different distributions of posteriors conditional on any state.} Nevertheless, it turns out that the \emph{average} distribution of posteriors over the population conditional on a state can be identified. To do so, we simply do the exercise of deducing the average \emph{conditional} ($\tp^\omega$) distribution of posteriors from the average \emph{ex-ante} distribution of posteriors ($\tilde{\p}$). The validity of the above deduction follows from the linearity of expectation and the linearity of the operator that maps an ex-ante distribution of posteriors to a conditional distribution of posteriors. Once we show that conditional on a state $\omega \in \Omega$ the average distribution of posteriors must be $\tilde{\p}^\omega$, the necessity for the mean-preserving spread (of $\p'$) condition immediately follows. To prove sufficiency we construct an information structure with the population's empirical posteriors $\p$ using similar arguments to those of Proposition \ref{pro:ex}.


\subsection{The binary-state binary-posterior case}
In the case where the state space is binary and the posteriors get only two values $a,b$ where $0\leq a < \mu < b \leq 1$, the distribution $\p$ is an element of $\Delta(\Delta_n(\{a,b\}))$ or simply an element of $\Delta(\{0,\frac{1}{n},...,1\})$ that indicates the weight of $b$ in the distribution. In the latter representation of the population's empirical posteriors the terms $\tp^\omega$ for $\omega=0,1$ represent the expected fraction of agents that have the high posterior conditional on a state. This expectation equals the probability of an agent having the high posterior in the \emph{single-agent} case. Simple calculations show that $\tp_1=\frac{(\mu-a)b}{(b-a)\mu}$ and $\tp_0=\frac{(\mu-a)(1-b)}{(b-a)(1-\mu)}$.

Thus, we deduce the following corollary:

\begin{corollary}\label{cor:binary}
	A population's empirical posteriors $\p\in\Delta(\{0,\frac{1}{n},\frac{2}{n},...,1\})$ is feasible for the prior $\mu$ and the posteriors $a<\mu<b$ if and only if $\p$ is a mean-preserving spread of  $$\mu\delta_{\frac{(\mu-a)b}{(b-a)\mu}}+(1-\mu)\delta_{\frac{(\mu-a)(1-b)}{(b-a)(1-\mu)}}.$$
\end{corollary}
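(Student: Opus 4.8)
The plan is to read off Corollary \ref{cor:binary} from Theorem \ref{theorem:main} by specializing to $m=2$ states and two possible posteriors $a<\mu<b$; the only substantive step is to evaluate the abstract distribution $\p'$ of Theorem \ref{theorem:main} in this setting and to transport the mean-preserving-spread relation through the natural identifications. Throughout, write $c_1:=\tfrac{(\mu-a)b}{(b-a)\mu}$ and $c_0:=\tfrac{(\mu-a)(1-b)}{(b-a)(1-\mu)}$ for the two atoms appearing in the statement.

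First I would fix the identification. Identify $\Delta(\Omega)$ with $[0,1]$ via the probability assigned to state $1$, so that every posterior is either $a$ or $b$, and identify a measure on $\{a,b\}$ with the weight it assigns to $b$. This identification is affine, hence it preserves the convex order, i.e.\ the mean-preserving-spread relation; under it $\Delta(\{a,b\})\cong[0,1]$, $\Delta_n(\{a,b\})\cong\{0,\tfrac1n,\dots,1\}$, and a $\p\in\Delta(\Delta_n(\{a,b\}))$ becomes the element of $\Delta(\{0,\tfrac1n,\dots,1\})$ in the statement. Since $\p$ is supported on $\Delta_n(\{a,b\})$, its expected measure $\tp$ is supported on $\{a,b\}$; set $\beta:=\tp(\{b\})$, which equals $\mathbb E[\p]$ under the identification.

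Next I would compute $\p'$. Assume for now $\beta=\tfrac{\mu-a}{b-a}$; then $\tp=\beta\delta_b+(1-\beta)\delta_a$ has barycenter $\beta b+(1-\beta)a=\mu$, so each $\tp^\omega$, defined by $\mathrm d\tp^\omega(x)=\tfrac1{\mu_\omega}x_\omega\,\mathrm d\tp(x)$, is a genuine probability measure. Reading off weights: since $x_1\in\{a,b\}$ on $\supp\tp$, we get $\tp^1=\tfrac{b\beta}{\mu}\delta_b+\tfrac{a(1-\beta)}{\mu}\delta_a$, with weight $\tfrac{b\beta}{\mu}=c_1$ on $b$; since $x_0=1-x_1$, we get $\tp^0=\tfrac{(1-b)\beta}{1-\mu}\delta_b+\tfrac{(1-a)(1-\beta)}{1-\mu}\delta_a$, with weight $\tfrac{(1-b)\beta}{1-\mu}=c_0$ on $b$. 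Hence, under the identification, $\p'=\mu\,\delta_{\tp^1}+(1-\mu)\,\delta_{\tp^0}$ corresponds exactly to $\mu\delta_{c_1}+(1-\mu)\delta_{c_0}$, so the conclusion is exactly Theorem \ref{theorem:main} transported across the (mean-preserving-spread-preserving) identification.

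Finally I would dispose of the case $\beta\neq\tfrac{\mu-a}{b-a}$. There $\p$ is not feasible, since for any information structure the average posterior equals the prior, which forces $\beta b+(1-\beta)a=\mu$; and $\p$ is not a mean-preserving spread of $\mu\delta_{c_1}+(1-\mu)\delta_{c_0}$ either, since that distribution has mean $\mu c_1+(1-\mu)c_0=\tfrac{\mu-a}{b-a}\neq\beta=\mathbb E[\p]$ while mean-preserving spreads preserve the mean; so the equivalence holds trivially. I do not expect a genuine obstacle: the argument is a direct substitution into Theorem \ref{theorem:main}, and the only care required is bookkeeping with the identifications plus the one-line observation that the mean-preserving-spread hypothesis already encodes the barycenter constraint, so this degenerate case needs no separate treatment.
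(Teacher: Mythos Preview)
Your proposal is correct and follows essentially the same approach as the paper, which derives the corollary directly from Theorem \ref{theorem:main} by computing $\tp^\omega$ in the binary-posterior case (the paper merely says ``simple calculations show'' and states the values of $\tp_1,\tp_0$). Your treatment is in fact more careful than the paper's: you make explicit the affine identification that transports the mean-preserving-spread relation, and you handle the degenerate case $\beta\neq\tfrac{\mu-a}{b-a}$, which the paper leaves implicit.
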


The fact that $\mu_P$ and $\tp^\omega$ can be deduced directly from the values of the posteriors $a$ and $b$ is a special feature of the binary-posterior case.\footnote{Those properties hold for the case of binary posteriors and an arbitrary state space.} In the general case, the terms $\mu_P$ and $\tp^\omega$ are not determined by the support of the posteriors but rather are determined by the distribution $\p$ as described above in the definitions of these terms. 

We also notice that for single-dimensional distributions it is easy to determine whether the distribution is a mean-preserving spread of a given binary-support distribution as follows. Let $\p\in\Delta(\{0,\frac{1}{n},\frac{2}{n},...,1\})$ be a distribution that assigns a probability of $p_i$ to $\frac{i}{n}$ and let $\alpha \delta_{a} + (1-\alpha) \delta_b$ be a binary-support distribution with $a<b$. We consider the \emph{$\alpha$-quantile distribution of $\p$} that is defined as follows. We let $k$ be the index such that $\sum_{i<k} p_i \leq \alpha < \sum_{i\leq k} p_i$. We let $q_k=\alpha - \sum_{i<k} p_i$. The $\alpha$-quantile distribution of $\p$ is denoted by $\p|_{\alpha}$ and is defined to be the distribution that assigns a probability of $\frac{p_i}{\alpha}$ to $\frac{i}{n}$ for every $i<k$ and assigns a probability of $\frac{q_k}{\alpha}$ to $\frac{k}{n}$. Simply speaking, $\p|_{\alpha}$ is the distribution of $\p$ conditional on the event that the realization belongs to the lower $\alpha$-quantile of $\p$. The following lemma characterizes the mean-preserving spread property.

\begin{lemma}\label{lem:mps}
	A distribution $\p\in\Delta(\{0,\frac{1}{n},\frac{2}{n},...,1\})$ is a mean-preserving spread of $\bb=\alpha \delta_{a} + (1-\alpha) \delta_b$ if and only if  $\mathbb{E}[\p]=\mathbb{E}[\bb]$ and $\mathbb{E}[\p|_{\alpha}]\leq a$.
\end{lemma}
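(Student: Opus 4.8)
The plan is to use the standard characterization of the convex (mean-preserving spread) order on the real line via integrated survival functions, specialized to the situation where the dominated distribution $\bb$ has only two atoms. Recall that $\p$ is a mean-preserving spread of $\bb$ if and only if $\mathbb{E}[\p]=\mathbb{E}[\bb]$ and $\int_t^{\infty}\p([x,\infty))\,\mathrm dx \geq \int_t^{\infty}\bb([x,\infty))\,\mathrm dx$ for every $t$ (equivalently, $\mathbb{E}[(\p-t)^+]\geq \mathbb{E}[(\bb-t)^+]$ for every $t$, or the dual statement with lower tails $\mathbb{E}[(t-\p)^+]\geq\mathbb{E}[(t-\bb)^+]$). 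Since both distributions have the same mean, the lower-tail and upper-tail formulations are equivalent, so it suffices to check $\mathbb{E}[(t-\p)^+]\geq\mathbb{E}[(t-\bb)^+]$ for all $t\in[0,1]$.

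The key simplification comes from the fact that $\bb=\alpha\delta_a+(1-\alpha)\delta_b$ is binary. The function $t\mapsto \mathbb{E}[(t-\bb)^+]$ is piecewise linear: it is $0$ for $t\leq a$, equals $\alpha(t-a)$ for $a\leq t\leq b$, and equals $\alpha(t-a)+(1-\alpha)(t-b)=t-\mathbb{E}[\bb]$ for $t\geq b$. The function $t\mapsto\mathbb{E}[(t-\p)^+]$ is convex, piecewise linear, nonnegative, nondecreasing, with right-derivative $\p((-\infty,t])$, and for $t\geq 1$ it equals $t-\mathbb{E}[\p]=t-\mathbb{E}[\bb]$. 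So both functions agree for $t\geq b$ (in fact for $t\geq 1$, and since both are affine with slope $1$ on $[b,1]$ once we know they agree at $t=1$, they agree on all of $[b,\infty)$ provided $\p((-\infty,b])$ is handled — I will verify the slopes match). Thus the inequality $\mathbb{E}[(t-\p)^+]\geq\mathbb{E}[(t-\bb)^+]$ is automatic for $t\geq b$ and for $t\leq a$, and the only content is on the interval $[a,b]$, where we must show the convex function $g(t):=\mathbb{E}[(t-\p)^+]$ lies above the line $\ell(t):=\alpha(t-a)$.

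Here is where the $\alpha$-quantile distribution enters. The line $\ell$ has slope $\alpha$, and $g$ is convex with slope $\p((-\infty,t])$ increasing through the value $\alpha$ exactly around the $\alpha$-quantile; moreover $\ell$ and $g$ both have the correct behavior at the endpoints. A convex function lies above a line of slope $\alpha$ on an interval precisely when it does so at the point where its own slope crosses $\alpha$, i.e., at (or near) the $\alpha$-quantile $\frac{k}{n}$. Writing out $g$ at that quantile and using the definition of $\p|_\alpha$, one gets $g\big(\tfrac{k}{n}\big)$ expressed via $\alpha\,\mathbb{E}[\p|_\alpha]$ and $\tfrac{k}{n}$, and comparing with $\ell\big(\tfrac{k}{n}\big)=\alpha\big(\tfrac{k}{n}-a\big)$ reduces the whole family of inequalities to the single condition $\mathbb{E}[\p|_\alpha]\leq a$. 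Conversely, if $\mathbb{E}[\p|_\alpha]>a$ then $g$ dips below $\ell$ near the quantile, exhibiting a $t$ that violates the spread inequality, so the condition is also necessary. I would carry this out by: (i) stating the integrated-tail criterion; (ii) computing $\mathbb{E}[(t-\bb)^+]$ explicitly; (iii) reducing to the interval $[a,b]$; (iv) identifying the worst-case $t$ as the $\alpha$-quantile via convexity and translating the inequality there into $\mathbb{E}[\p|_\alpha]\leq a$; (v) noting the mean condition is separately necessary and is assumed.

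The main obstacle is step (iv): making precise the claim that the single point $t=\tfrac{k}{n}$ (the $\alpha$-quantile) is the binding constraint, and correctly handling the boundary bookkeeping — the partial atom $q_k$ at $\tfrac{k}{n}$, the possibility that $\alpha=\sum_{i<k}p_i$ exactly (so $q_k=0$), and the degenerate cases $\alpha\in\{0,1\}$ or $k=0$. The cleanest route is to observe that $h(t):=g(t)-\ell(t)$ is convex on $[a,b]$ with $h(a)\geq 0$, $h(b)=0$, and right-derivative $\p((-\infty,t])-\alpha$, which is negative for $t<\tfrac{k}{n}$ and nonnegative for $t\geq\tfrac{k}{n}$; hence $h$ is minimized on $[a,b]$ at $t=\tfrac{k}{n}$ (intersected with $[a,b]$), so $h\geq 0$ on $[a,b]$ iff $h\big(\tfrac{k}{n}\big)\geq 0$, and a direct computation shows $h\big(\tfrac{k}{n}\big)=\alpha\big(a-\mathbb{E}[\p|_\alpha]\big)$, completing both directions.
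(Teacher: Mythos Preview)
Your approach is essentially the paper's, just phrased in terms of lower partial expectations $g(t)=\mathbb{E}[(t-X)^+]$ rather than integrated CDFs $\int_0^t F_\p$; since $\int_0^t F_\p(x)\,\mathrm dx=\mathbb{E}[(t-X)^+]$, the two criteria are identical, and both arguments reduce the family of inequalities to the single check at the $\alpha$-quantile $t=\tfrac{k}{n}$, then compute that this check is exactly $\mathbb{E}[\p|_\alpha]\le a$. The paper invokes the ``it suffices to check at crossing points'' fact, whereas you locate the minimizer of $h=g-\ell$ via convexity; these are the same reduction. One small slip: you write $h(b)=0$, but in general $\p$ may place mass above $b$, so $h(b)=\mathbb{E}[(X-b)^+]\ge 0$; this does not affect the argument, since you only need $h(b)\ge 0$ to conclude that the minimum of $h$ on $[a,b]$ is attained at $\tfrac{k}{n}$ (or at an endpoint, where $h\ge 0$ automatically).
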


\begin{proof}
	We denote by $F_\p,F_\bb$ the CDFs for $\p$ and $\bb$. We have that $\p$ is a mean preserving spread of $\bb$ if and only if $\mathbb{E}[\p]=\mathbb{E}[\bb]$ and for every $t\in [0,1]$ we have $\int_0^t F_\p(x) dx \geq \int_0^t F_\bb(x) dx $. Moreover, it is well known that it suffices to check this inequality only for points $t$ such that $F_\p(t)$ crosses $F_\bb (t)$.
	In our case $F_\bb$ is piecewise constant with values $0,\alpha,1$ in the corresponding segments $[0,a),[a,b),[b,1]$. Therefore, in our case it is sufficient to check the inequality at the single point $t=\frac{k}{n}$.
	
	Note that the CDF of $\p|_{\alpha}$ is $F_{\p|_\alpha}(x)=\alpha^{-1} F_\p(x)$. Hence we have 
	$$\int_0^\frac{k}{n} F_\p(x) dx =\alpha \int_0^\frac{k}{n} F_{\p|_\alpha}(x) dx=\alpha \left(\frac{k}{n}-\int_0^\frac{k}{n} (1-F_{\p|_\alpha}(x)) dx \right)=\alpha \left(\frac{k}{n}-\mathbb{E}[\p|_{\alpha}] \right).$$
	We also have $\int_0^\frac{k}{n} F_\bb(x) dx = \alpha (\frac{k}{n}-a)$ and therefore $\int_0^\frac{k}{n} F_\p(x) dx \geq \int_0^\frac{k}{n} F_\bb(x) dx$ if and only if $\mathbb{E}[\p|_{\alpha}]\leq a$.
\end{proof}

\subsection{Proof of Theorem \ref{theorem:main}}\label{sec:proof}

Let $\p\in\Delta(\Delta_n(\Delta(\Omega))$ be a mean-preserving spread of $\p'$. Namely, there exist $m$ distributions  $\q^\omega\in\Delta(\Delta(\Omega))$ with expectation $\tp^\omega$  such that 
$$\p=\sum_{\omega \in \Omega}\mu_\omega\q^\omega.$$



We now define a signaling structure $G$ that implements $\p$ as follows.
Let $S_j=\Delta(\Omega)$ for every agent $j$ and let $G(\omega)=\mu_i$ for every state $i\in\Omega$. 
We define $G(s|\omega)$ by coupling this measure with $\q^\omega$ as follows. We draw $H\in\Delta_n(\Delta(\Omega))$ according to $\q^\omega$. Every such $H$ has the form $H=\sum_{k\in[n]}\frac{1}{n}\delta_{x^k}$ where $x^k\in\Delta(\Omega)$.
For every realized $H$ we draw a permutation $\pi:[n]\to[n]$ uniformly at random across all permutations and send signal $x^{\pi(j)}$ to every agent $j$. Thus, overall the probability that agent $j$ observes the signal $x^k$ is $\frac{1}{n}$ for every $j,k\in[n]$. 

We claim that $\p_G=\p$. To see this, by the construction, it is enough to show that $G(\omega|s_j=x)=x_\omega$ for every agent $j\in[n]$ and state $\omega\in\Omega$.
We note first that since $H$ is drawn according to $\q^\omega$ and since the expectation of $\q^\omega$ is $\tp^\omega$ we have that $G(s_j\in B|\omega)=\tp^\omega(B)$ for every Borel measurable set $B\subset\Delta(\Omega)$.  
Therefore, by Bayes's rule, for every $\omega\in\Omega,$ we have that
\begin{align}\label{eq:bayesr}
	G(\omega|s_j=x)=\frac{\mu_\omega\frac{\mathrm{d}\tp^\omega}{\mathrm{d}\tp}(x)}{\sum_{\omega'\in\Omega}\mu_{\omega'} \frac{\mathrm{d}\tp^{\omega'}}{\mathrm{d}\tp}(x)}  
\end{align}
with probability 1.

We recall that
$\mathrm{d}\tp^\omega(x)=\frac{x_\omega}{\mu_\omega}\mathrm{d}\tp(x).$
Therefore, $\frac{\mu_{\omega'}\mathrm{d}\tp_{\omega'}(x)}{\mathrm d \tp(x)}=x_{\omega'}$ for every state $\omega'\in\Omega$. Hence, Equation \eqref{eq:bayesr} implies that
$G(\omega|s_j=x)=\frac{x_\omega}{\sum_{\omega'\in \Omega }x_{\omega'}}=x_\omega$ as desired.

We now turn to the converse direction; namely, we show that every feasible $\p\in \pr_n$ is a mean-preserving spread of $\p'$.
For clarity of exposition we prove this statement only for the case where $\p$ is finitely supported. The extension of the proof to arbitrary distributions $\p$ is relegated to the Appendix \ref{app:trm-proof} and essentially follows from the fact that any distribution $\p$ can be approximated (in the weak* topology) by finitely supported ones.  

We assume that $\p=\p_G$ is induced by an information structure $G\in\Delta(\Omega\times S_1\times\cdots\times S_n)$ with finite signal sets $S_i$ for every $i\in[n]$ (the finite support assumption). 
In addition to the distribution $\p$, we next define a few other auxiliary measures. First, for every state $\omega\in\Omega$ let $\p^\omega_{G}\in\Delta(\Delta_n(\Delta(\Omega)))$ be the conditional distribution of $H_s$ conditional on state $\omega$. That is, similar to Equation \eqref{eq:pg},
$$\p^\omega_{G}=\sum_{s\in S_1\times\cdots\times S_n}G(s|\omega)\delta_{H_s}.$$
It follows that 
\begin{equation}\label{eq:definitionP}
	\p=\sum_{\omega\in \Omega} \mu_\omega \p^\omega_{G}.    
\end{equation}
For every $\omega\in\Omega$, let $\tp^\omega_{G}$ be the expectation of $\p^\omega_{G}$. By definition, the measure $\p^\omega_{G}$ is a mean-preserving spread of $\delta_{\tp^\omega_{G}}$. 
Therefore, in particular, by Equation \eqref{eq:definitionP} $\p$ is a mean-preserving spread of the measure 
$\sum_{\omega \in \Omega} \mu_\omega \delta_{\tp^\omega_{G}}.$ 
In order to complete the proof we need to show that $\tp_{\omega}$ as defined above equals $\tp^\omega_{G}$ for every state $\omega \in\Omega$.

For each signal $s_j\in S_j$ let $G(\ |s_j)\in\Delta(\Omega)$ be the conditional probability on $\Omega$ given the signal $s_j$. Let $H_{s_j}=\delta_{G(\ |s_j)}.$ Using this notation, for any vector of realized signals $s=(s_1,\ldots,s_n)$, one can write the measure $H_{s}$ as follows:
\begin{equation*}
	H_s=\frac{1}{n}\sum_{j=1}^n H_{s_j}.
\end{equation*}
Therefore, we have 
$$\p=\sum_{s\in S_1\times\cdots\times S_n}G(s)\frac{1}{n}\sum_{j=1}^n H_{s_j}.$$
Hence, by definition of $\tp$, for every $x\in \Delta(\Omega)$ we have 
\begin{equation*}
	\tp(\{x\})=\sum_{s\in S_1\times\cdots\times S_n}G(s)\frac{1}{n}\sum_{j=1}^n H_{s_j}(x)=
	\frac{1}{n}\sum_{j=1}^n\sum_{s_j\in S_j,G(\ |s_j)=x}G(s_j).      
\end{equation*}
Similarly, for any state $\omega\in\Omega$,
\begin{equation*}
	\tp^\omega_{G}(\{x\})=\sum_{s\in S_1\times\cdots\times S_n}G(s|\omega)\frac{1}{n}\sum_{j=1}^n H_{s_j}(x)=
	\frac{1}{n}\sum_{j=1}^n\sum_{s_j\in S_j,G(\ |s_j)=x}G(s_j|\omega).      
\end{equation*}
It follows from Bayes's rule that if $G(\omega|s_j)=x_\omega$, then $G(s_j|\omega)=x_\omega \frac{G(s_j)}{\mu_\omega}.$
Hence $\tp^\omega_{G}(\{x\})=\frac{x_\omega}{\mu_\omega}\tp(\{x\})$ for every $x\in\Delta(\Omega)$. 
Thus $d\tp_{G,\omega}(\{x\})=\frac{x_\omega}{\mu_\omega}d\tp(\{x\}).$ Therefore, for every $\omega \in \Omega$ it holds that $\tp^\omega_{G}=\tp^{\omega}$, as desired.

\section{Polarization}\label{sec:polar}
The question of how polarized the posteriors of Bayesian agents might be has been studied  in \citep{dubins1980maximal}, \citep{burdzy2020bounds}, and \citep{arieli2021feasible}. These papers focuse on the case of two receivers. In \citep{dubins1980maximal} polarization is measured by the absolute distance $|x_1-x_2|$ and more generally in \citep{arieli2021feasible}  polarization is measured by the function $|x_1-x_2|^\alpha$ for some $\alpha>0$, where $x_i\in [0,1]$ denotes the posteriors of agent $i=1,2$. The case of $\alpha=2$ is a particularly convenient case to study from a technical perspective. It is also a natural case to study for the following descriptive reason: it has a natural extension to an arbitrary number of agents as we show below.

Given a distribution of posteriors in the population $H\in \Delta_n([0,1])$, we can measure polarization via $\var(H)$. Note that in the special case of $n=2$ with posteriors $x_1,x_2$ the variance is $\frac{(x_1-x_2)^2}{4}$, which (up to a factor of 4) is exactly the case of $\alpha=2$. 

The results of \citep{dubins1980maximal} for $\alpha=1$ and of \citep{arieli2020identifiable} for $\alpha\leq 2$ show that for the case of two agents the information structure that maximizes polarization is the one that reveals the state to one agent and provides no information to the other agent.\footnote{It remains an open problem to specify the information structure that maximizes polarization for $\alpha>2$, even for the case of two agents.} This information structure achieves an expected variance of $\frac{\mu (1-\mu)}{4}$ for information structures with prior $\mu\in [0,1]$. We show that the same result is true for every even number of players: the information structure that maximizes polarization is the one that reveals the state to half of the agents and provides no information to the other half of the agents. 

The polarization of $G$ is the expected variance of the distribution of posteriors in the population. That is, for any $G\in\mathcal{G}$ let 
$V_G=\int_{\Delta([0,1])}\var(H)\mathrm dG(H)$.

\begin{proposition}\label{pro:pol}
	Let $\mu\in [0,1]$ be the prior. 
	For every even $n$, the information structure that maximizes the polarization of $n$ agents is the one that reveals the state to half of the agents and reveals no information to the other half. The expected polarization equals $\frac{\mu (1-\mu)}{4}$. 
	
	For every odd $n$, the maximal polarization is bounded between $$\left(1-\frac{1}{n^2}\right)\frac{\mu (1-\mu)}{4} \leq \max_{G\in \mathcal{G}_n} V_{G} \leq \frac{\mu (1-\mu)}{4}.$$
\end{proposition}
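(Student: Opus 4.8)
The plan is to use Theorem~\ref{theorem:main} to collapse the problem to a one-dimensional concave maximization, and then exhibit explicit structures matching the bound. Identify $\Omega=\{0,1\}$ with $[0,1]$ as in Section~\ref{sec:warm}; for a population distribution $H\in\Delta_n([0,1])$ write $\bar x(H)=\mathbb{E}_{x\sim H}[x]$, so that $\var(H)=\mathbb{E}_{x\sim H}[x^2]-\bar x(H)^2$. Fix $G\in\mathcal{G}_n$, put $\p=\p_G$, and set $A=\mathbb{E}_{x\sim\tp}[x^2]=\mathbb{E}_{H\sim\p}\big[\mathbb{E}_{x\sim H}[x^2]\big]$. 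Since $\tp$ is the expected posterior distribution it has mean $\mu$, and as $x^2\le x$ on $[0,1]$ we get $A\in[\mu^2,\mu]$. Taking expectations in the variance identity gives $V_G=A-\mathbb{E}_{H\sim\p}[\bar x(H)^2]$, so the first step is to bound $\mathbb{E}[\bar x(H)^2]$ from below.

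For this I would invoke the conditional structure established in the proof of Theorem~\ref{theorem:main}: writing $\p=\mu\q^1+(1-\mu)\q^0$ with $\q^\omega$ the population distribution conditional on $\omega$, the barycenter of $\q^\omega$ is $\tp^\omega$, and from $\dd\tp^1(x)=\frac{x}{\mu}\dd\tp(x)$, $\dd\tp^0(x)=\frac{1-x}{1-\mu}\dd\tp(x)$ one computes $\mathbb{E}_{\tp^1}[x]=A/\mu$ and $\mathbb{E}_{\tp^0}[x]=(\mu-A)/(1-\mu)$. Applying Jensen's inequality to $t\mapsto t^2$ within each state,
\[
\mathbb{E}_{H\sim\p}[\bar x(H)^2]\ \ge\ \mu\Big(\tfrac{A}{\mu}\Big)^2+(1-\mu)\Big(\tfrac{\mu-A}{1-\mu}\Big)^2=\frac{A^2}{\mu}+\frac{(\mu-A)^2}{1-\mu},
\]
hence $V_G\le g(A):=A-\frac{A^2}{\mu}-\frac{(\mu-A)^2}{1-\mu}$. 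Then I would note that $g$ is concave and maximized at $A^\star=\frac{\mu(1+\mu)}{2}\in[\mu^2,\mu]$ with $g(A^\star)=\frac{\mu(1-\mu)}{4}$, so $V_G\le\frac{\mu(1-\mu)}{4}$ for every $G$ and every $n$; this already proves the upper bound in both parts of the proposition.

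It remains to provide \emph{matching constructions}. For even $n$, reveal the state to $n/2$ agents and send an uninformative signal to the rest: conditional on $\omega$ the population distribution is deterministically $\frac12\delta_1+\frac12\delta_\mu$ (if $\omega=1$) or $\frac12\delta_0+\frac12\delta_\mu$ (if $\omega=0$), so $\q^\omega$ is a point mass (Jensen is tight), with variances $\frac{(1-\mu)^2}{4}$ and $\frac{\mu^2}{4}$; averaging over the prior yields $V_G=\frac{\mu(1-\mu)}{4}$. For odd $n$, reveal the state to $k=(n-1)/2$ agents and nothing to the other $(n+1)/2$; with $\lambda=k/n$, conditional on $\omega$ the population distribution is $\lambda\delta_1+(1-\lambda)\delta_\mu$ or $\lambda\delta_0+(1-\lambda)\delta_\mu$, of variance $\lambda(1-\lambda)(1-\mu)^2$ resp.\ $\lambda(1-\lambda)\mu^2$, so $V_G=\lambda(1-\lambda)\mu(1-\mu)=\frac{n^2-1}{4n^2}\mu(1-\mu)=\big(1-\tfrac{1}{n^2}\big)\frac{\mu(1-\mu)}{4}$, which is the asserted lower bound. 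Both of these are explicit information structures, so their feasibility is immediate and needs no appeal to the theorem.

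The main obstacle is the upper bound, and specifically the realization that conditioning on the state is the right way to split $\mathbb{E}[\bar x(H)^2]$: the role of Theorem~\ref{theorem:main} is to force the conditional expected posterior distributions to equal $\tp^\omega$, so that the only inequality in the argument is Jensen applied within each state — which is exactly why the half-and-half structure, where $\bar x(H)$ is state-measurable, is optimal. After that setup the proposition reduces to the elementary facts that $g$ peaks at $\frac{\mu(1-\mu)}{4}$ and that $\lambda(1-\lambda)$ over $\lambda\in\{0,\tfrac1n,\dots,1\}$ peaks at $\frac{n^2-1}{4n^2}$, attained at $\lambda=\frac{n\pm1}{2n}$, in the odd case. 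A minor point to handle carefully is that equality in the upper bound requires only that $\bar x(H)$ be state-measurable together with $\mathbb{E}_{\tp}[x^2]=\frac{\mu(1+\mu)}{2}$, so I would phrase the even-$n$ conclusion as: the half-and-half structure is an optimal information structure, attaining the value $\frac{\mu(1-\mu)}{4}$.
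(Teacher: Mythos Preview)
Your proposal is correct and follows essentially the same route as the paper: decompose $\p$ state-by-state via Theorem~\ref{theorem:main}, apply Jensen to the (squared) mean $\bar x(H)$ within each state to reduce to the one-variable function $g(A)=A-\tfrac{A^2}{\mu}-\tfrac{(\mu-A)^2}{1-\mu}$ (which coincides with the paper's expression after expansion), and then exhibit the half-and-half construction. The only cosmetic differences are that the paper phrases the Jensen step as $V_\p\le V_{\p'}$ with $\p'=\mu\delta_{\tp^1}+(1-\mu)\delta_{\tp^0}$ and computes $\var(\tp^\omega)$ directly, and that for odd $n$ the paper reveals the state to $\tfrac{n+1}{2}$ agents rather than your $\tfrac{n-1}{2}$; by the symmetry $\lambda\leftrightarrow 1-\lambda$ both choices give $\lambda(1-\lambda)=\tfrac{n^2-1}{4n^2}$.
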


For an odd number of agents, Proposition \ref{pro:pol} provides asymptotically tight bounds when $n\to \infty$. It remains an open problem to provide the precise characterization of polarization for an odd number of agents. 
Interestingly, the idea of sending full information or no information to each one of the agents is no longer optimal for an odd number of agents as demonstrated in the following example.

\begin{example}
	For $n=3$ and prior $\mu=\frac{1}{2}$, consider the information structure where the state is revealed to agent $3$, and agents $i=1,2$ receive the signals $s^i_1,s^i_2$ according to the following distribution conditional on a state:
	\begin{table}[H]
		\begin{center}
			\begin{tabular}{ccccccc}
				\multicolumn{3}{c}{$\omega=0$}                                                     & \ \ & \multicolumn{3}{c}{$\omega=1$}                                                     \\
				& $s_1^2$                  & $s_2^2$                  &                                 &                              & $s_1^2$                  & $s_2^1$                  \\ \cline{2-3} \cline{6-7} 
				\multicolumn{1}{c|}{$s_1^1$} & \multicolumn{1}{c|}{1/3} & \multicolumn{1}{c|}{1/3} &                                 & \multicolumn{1}{c|}{$s_1^1$} & \multicolumn{1}{c|}{0}   & \multicolumn{1}{c|}{1/3} \\ \cline{2-3} \cline{6-7} 
				\multicolumn{1}{c|}{$s_2^1$} & \multicolumn{1}{c|}{1/3} & \multicolumn{1}{c|}{0}   &                                 & \multicolumn{1}{c|}{$s_2^1$} & \multicolumn{1}{c|}{1/3} & \multicolumn{1}{c|}{1/3} \\ \cline{2-3} \cline{6-7} 
			\end{tabular}
		\end{center}
	\end{table}
	Agent $i$'s posterior after observing $s^i_1$ is $1/3$; her posterior after observing $s^i_2$ is $2/3$.
	The resulting distribution over posterior profiles is uniform over the profiles: $$\{(\frac{1}{3},\frac{1}{3},0),(\frac{1}{3},\frac{2}{3},0),(\frac{2}{3},\frac{1}{3},0),(\frac{1}{3},\frac{2}{3},1),(\frac{2}{3},\frac{1}{3},1),(\frac{2}{3},\frac{2}{3},1)\}.$$ 
	The expected variance is $\frac{2}{6}\frac{2}{81}+\frac{4}{6}\frac{2}{27}=\frac{14}{243}$, which is above $\frac{1}{18}$, where $\frac{1}{18}$ is the maximal expected variance that can be obtained by sending to every player either  full information or no information about the state.
\end{example}

\begin{proof}[Proof of Proposition \ref{pro:pol}]
	First we show that $V_G\leq \frac{\mu (1-\mu)}{4}$ for every number of agents $n$ (odd or even). This statement can be equivalently stated with respect to a feasible population's empirical posteriors. We denote $V_\p=\int_{\Delta([0,1])}\var(H)\mathrm d\p(H)$ and we trivially have $\max_{G\in \mathcal{G}_n} V_G=\max_{\p \in \mathcal{P}_n(\mu)} V_{\p} $.
	
	Let $\p\in\mathcal{P}_n(\mu).$ By Theorem \ref{theorem:main} we can decompose $\p=\mu_\p\p_1+(1-\mu_\p)\p_0$ where $\mathbb{E}[\p^\omega]=\tp^\omega$.
	Therefore, $V_\p=\mu_\p V_{\p_1}+(1-\mu_\p) V_{\p_0}.$
	Let $\p'=\mu_\p\delta_{\tp_1}+(1-\mu_\p)\delta_{\tp_0}$. Note that since $\p\in\mathcal{P}(\mu)$ it also holds that $\p'\in\mathcal{P}(\mu).$ We claim that $V_{\p}\leq V_{\p'}$. To see this, note that for any $\omega=0,1$,
	\begin{align}
		&V_{\p^\omega}=\int_{\Delta([0,1])}\var(H)\mathrm d\p^\omega(H)=\int_{\Delta([0,1])}\mathbb{E}_H[x^2]-(\mathbb{E}_H[x])^2\mathrm d\p^\omega(H) =\label{eq:varaince}\\
		\notag&\int_{\Delta([0,1])}\mathbb{E}_H[x^2]\mathrm d \p^\omega(H)-\int_{\Delta([0,1])}(\mathbb{E}_H[x])^2\mathrm  d\p^\omega(H)=\\
		&\mathbb{E}_{\tp^\omega}[x^2]-\int_{\Delta([0,1])}(\mathbb{E}_H[x])^2\mathrm d \p(H)\leq\label{eq:expectation}\\
		&\mathbb{E}_{\tp^\omega}[x^2]-(\mathbb{E}_{\tp^\omega}[x]))^2=\var(\delta_{\tp^\omega})\label{eq:jensen}
	\end{align}
	Note that equation \eqref{eq:varaince} follows from $\var(H)=\mathbb{E}_H[x^2]-(\mathbb{E}_H[x])^2$. Equation \eqref{eq:expectation} follows from the fact that the operation $\mathbb{E}_H(x^2):\Delta(\Delta([0,1]))\to\mathbb R$ is a linear operator and so for every $\q\in\Delta(\Delta([0,1]))$ we have that $\int_{\Delta([0,1])}\mathbb{E}_H[x^2]\mathrm d \q(H)=\mathbb{E}_{\tilde\q}[x^2]$. The inequality in \eqref{eq:jensen}  follows from Jensen's inequality.
	Since $V_{\p'}=\mu_\p V_{\delta_{\tp_1}}+(1-\mu_\p) V_{\delta_{\p_0}}$ it follows that $V_{\p'}\geq V_{\p}.$ 
	
	
	Note that $\tp_1,\tp_0$ satisfy
	$$\mathrm d \tp_1=\frac{x}{\mu}\mathrm d \tp\text{ and }\mathrm d \tp_0=\frac{1-x}{1-\mu}d \tp,$$
	for some $\tp\in\Delta([0,1])$ with expectation $\mu$, and hence the variances can be written as 
	
	\begin{align*}
		\var(\tp_1)&=\int_{[0,1]}\frac{x^3}{\mu}\mathrm d \tp (x)-\left(\int_{[0,1]}\frac{x^2}{\mu}\mathrm d \tp (x)\right)^2 \\
		\var(\tp_0)&=\int_{[0,1]}\frac{x^2(1-x)}{1-\mu}\mathrm d \tp (x)-\left(\int_{[0,1]}\frac{x(1-x)}{1-\mu}\mathrm d \tp (x)\right)^2
	\end{align*}
	
	Therefore,
	\begin{align}
		&V_{\p} \leq \notag\mu \var(\tp_1)+(1-\mu) \var(\tp_0)=\\
		&\notag \int_{[0,1]}x^3+x^2(1-x)\mathrm d \tp (x)-\frac{1}{\mu}\left(\int_{[0,1]}x^2\mathrm d \tp (x)\right)^2-\frac{1}{1-\mu}\left(\int_{[0,1]}x(1-x)\mathrm d \tp (x)\right)^2  =\\
		&\notag \int_{[0,1]}x^2\mathrm d \tp (x)-\frac{1}{\mu}\left(\int_{[0,1]}x^2\mathrm d \tp (x)\right)^2-\frac{1}{1-\mu}\left(\mu-\int_{[0,1]}x^2\mathrm d \tp (x)\right)^2=\\
		&\notag (\frac{1+\mu}{1-\mu})\int_{[0,1]}x^2\mathrm d \tp (x)-\frac{1}{\mu(1-\mu)}\left(\int_{[0,1]}x^2\mathrm d \tp (x)\right)^2-\frac{\mu^2}{1-\mu}.
	\end{align}
	Denote $A=\int_{[0,1]}x^2\mathrm d \tp (x)$ and we obtain $V_{\p}\leq (\frac{1+\mu}{1-\mu})A-\frac{1}{\mu(1-\mu)}A^2-\frac{\mu^2}{1-\mu}$.
	This is a degree 2 polynomial whose global maximum is 
	$\frac{\mu(1-\mu)}{4}$ that is obtained at $A=\frac{\mu(1+\mu)}{2}$. 
	
	To show that $\frac{\mu (1-\mu)}{4}$ is achievable for an even number of players, we calculate the expected variance for the information structure that reveals the state to $\frac{n}{2}$ of the players and reveals nothing to the remaining $\frac{n}{2}$ players and get precisely $\frac{\mu(1-\mu)}{4}$.
	
	To show that $(1-\frac{1}{n^2}) \frac{\mu (1-\mu)}{4}$ is achievable for an odd number of players, we calculate the expected variance for the information structure that reveals the state to $\frac{n+1}{2}$ of the players and reveals no information to the remaining $\frac{n-1}{2}$ players and get the desired expression.
\end{proof}

\subsection{Multiple States}

For an arbitrary finite state space $\Omega$, one might measure polarization of a distribution of posteriors in the population via $\pol(H)=\mathbb{E}_{x\sim H} [\|x-\mathbb{E}[H]\|_2^2]$, where $x$ and $\mathbb{E}[H]$ are $|\Omega|$-dimensional vectors of beliefs in the simplex. Note that $\pol(H)=\sum_{\omega\in \Omega} \var(X_\omega)$, where $X_\omega$ denotes the random variable of the posterior probability that is assigned to state $\omega\in \Omega$. Therefore, for an arbitrary state space we obtain the following corollary.

\begin{corollary}
	For every prior $\mu\in \Delta(\Omega)$ and every even $n$ the information structure that maximizes the expected polarization 
	(i.e., $\mathbb{E}_{H\sim P} \pol(H)$) of $n$ agents is the one that reveals the state to half of the agents and reveals no information to the other half.
\end{corollary}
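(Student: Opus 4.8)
The plan is to reduce to the binary case of Proposition \ref{pro:pol}, one coordinate at a time. Writing $\pol(H)=\sum_{\omega\in\Omega}\var_H(X_\omega)$, where $X_\omega$ is the coordinate of a posterior drawn from $H$ recording the weight on state $\omega$, linearity of expectation gives, for any feasible $\p\in\mathcal{P}_n(\mu)$, $\mathbb{E}_{H\sim\p}[\pol(H)]=\sum_{\omega\in\Omega}V^\omega_\p$, where $V^\omega_\p:=\mathbb{E}_{H\sim\p}[\var_H(X_\omega)]$. It therefore suffices to (i) bound each $V^\omega_\p$ and (ii) exhibit a single information structure attaining all these bounds simultaneously.

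For (i), fix $\omega$ and rerun the argument of Proposition \ref{pro:pol} for the single coordinate $X_\omega$: by Theorem \ref{theorem:main} write $\p=\sum_{\omega'\in\Omega}\mu_{\omega'}\q^{\omega'}$ with $\mathbb{E}[\q^{\omega'}]=\tp^{\omega'}$, and combine the linearity of the second-moment functional $H\mapsto\mathbb{E}_H[X_\omega^2]$ with Jensen's inequality to obtain $V^\omega_\p\leq\sum_{\omega'}\mu_{\omega'}\var_{\tp^{\omega'}}(X_\omega)$. Substituting $\mathrm{d}\tp^{\omega'}(x)=\tfrac{x_{\omega'}}{\mu_{\omega'}}\mathrm{d}\tp(x)$ and using $\sum_{\omega'}x_{\omega'}=1$ collapses the right-hand side to $A_\omega-\sum_{\omega'\in\Omega}\tfrac{1}{\mu_{\omega'}}B_{\omega\omega'}^2$, where $A_\omega=\int x_\omega^2\,\mathrm{d}\tp(x)$ and $B_{\omega\omega'}=\int x_\omega x_{\omega'}\,\mathrm{d}\tp(x)$, which satisfy $B_{\omega\omega}=A_\omega$ and $\sum_{\omega'}B_{\omega\omega'}=\mu_\omega$. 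Peeling off the $\omega'=\omega$ term and bounding $\sum_{\omega'\neq\omega}\tfrac{B_{\omega\omega'}^2}{\mu_{\omega'}}\geq\tfrac{(\mu_\omega-A_\omega)^2}{1-\mu_\omega}$ by Cauchy--Schwarz turns the bound into exactly the degree-two polynomial in $A_\omega$ appearing at the end of the proof of Proposition \ref{pro:pol} (with $\mu$ replaced by $\mu_\omega$), whose maximum over $A_\omega$ is $\tfrac{\mu_\omega(1-\mu_\omega)}{4}$. Summing over $\omega$ gives $\mathbb{E}_{H\sim\p}[\pol(H)]\leq\sum_{\omega\in\Omega}\tfrac{\mu_\omega(1-\mu_\omega)}{4}=\tfrac14\bigl(1-\sum_{\omega}\mu_\omega^2\bigr)$.

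For (ii), let $G^*$ reveal the state to $n/2$ agents and nothing to the other $n/2$. Conditional on state $\omega$ the population's posterior distribution is $H_\omega=\tfrac12\delta_{e_\omega}+\tfrac12\delta_\mu$, where $e_\omega\in\Delta(\Omega)$ is the belief concentrated on $\omega$; hence $\pol(H_\omega)=\tfrac14\|e_\omega-\mu\|_2^2=\tfrac14\bigl(1-2\mu_\omega+\sum_{\omega'}\mu_{\omega'}^2\bigr)$, and averaging with weights $\mu_\omega$ yields $\mathbb{E}_{H\sim\p_{G^*}}[\pol(H)]=\tfrac14\bigl(1-\sum_{\omega}\mu_\omega^2\bigr)$, which matches the upper bound, so $G^*$ is optimal. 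The one point that is genuinely more than a coordinate-wise copy of Proposition \ref{pro:pol} is step (ii): a priori the per-coordinate maxima need not be attained by a common $\tp$, so the summed bound could fail to be tight, and it is precisely the computation for $G^*$ — which happens to be simultaneously optimal in every coordinate — that closes the gap. (States $\omega'$ with $\mu_{\omega'}=0$ have $x_{\omega'}=0$ $\tp$-almost surely and may simply be dropped from all the sums above.)
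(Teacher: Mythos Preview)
Your proof is correct and follows the same overall strategy as the paper: decompose $\pol(H)=\sum_\omega\var_H(X_\omega)$, bound each summand separately, and then observe that the ``reveal to half'' structure $G^*$ attains every coordinate bound simultaneously. The paper's own argument for the corollary is a one-line appeal to Proposition~\ref{pro:pol}: for each fixed $\omega$, the coordinate $X_\omega$ is the posterior in the binary problem $\{\omega\}$ versus $\Omega\setminus\{\omega\}$, so Proposition~\ref{pro:pol} directly gives $V^\omega_\p\le\tfrac{\mu_\omega(1-\mu_\omega)}{4}$, and $G^*$ is the Proposition~\ref{pro:pol} optimizer for every such binary problem at once.

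Your route differs only in how you obtain the per-coordinate upper bound: rather than reducing to the binary case as a black box, you redo the Proposition~\ref{pro:pol} computation inside the full multi-state model via Theorem~\ref{theorem:main}, which forces you to handle the cross-terms $B_{\omega\omega'}$ and costs you one extra Cauchy--Schwarz step to collapse $\sum_{\omega'\neq\omega}B_{\omega\omega'}^2/\mu_{\omega'}\ge(\mu_\omega-A_\omega)^2/(1-\mu_\omega)$. This is a legitimate and fully self-contained alternative (it avoids having to argue that the $\omega$-marginal of a feasible $\p$ on $\Omega$ is feasible for the binary state space), but it reproduces exactly the same quadratic in $A_\omega$ and hence the same bound; your closing remark about why simultaneous achievability is the crux is precisely the content of the paper's one-sentence proof.
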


The corollary simply follows from the fact that $\pol(H)=\sum_{\omega \in \Omega} \var(X_\omega)$ and the described information structure simultaneously maximizes $\var(X_\omega)$ for all $\omega\in \Omega$.

\section{Vectorial Feasibility}\label{sec:vector}

In this section we connect our results on the feasibility of distributions over \emph{empirical distributions of beliefs} with the notion of the feasibility of distributions over \emph{vectors of beliefs}. As mentioned in the Introduction, the latter notion of feasibility has been previously studied in \citep{dawid1995coherent,arieli2021feasible,privateprivate}. To distinguish between the two notions we refer to the latter notion as \emph{vectorial feasibility}.

The model is similar to the one presented in Section \ref{sec:main}. Given an information structure $I\in \Delta(\Omega \times S_1 \times ... \times S_n)$, every realization of signals $(s_i)_{i\in [n]}$ induces a vector of beliefs $(x_i(s_i))_{i\in [n]}$. Hence, every information structure induces a distribution over vectors of beliefs $\vv_I \in \Delta((\Delta(\Omega)^n)$.

\begin{definition}
	A distribution $\vv \in \Delta((\Delta(\Omega)^n)$ is \emph{vector feasible} if there exists $I$ such that $\vv=\vv_I$. 
\end{definition}

Every distribution $\vv\in \Delta((\Delta(\Omega)^n)$ induces a distribution $\p(\vv)\in \Delta(\Delta_n(\Delta(\Omega)))$ simply because every vector of beliefs induces an empirical distribution of beliefs in the population. 
We observe that the feasibility of $\p(\vv)$ is a necessary but insufficient condition for the feasibility of $\vv$. Indeed this condition is necessary, 
because if there exists no $I$ that will induce the distribution of empirical distributions $\p(\vv)$ then $\vv$ cannot be implemented by any $I$.
It is also not hard to see that the feasibility of $\p(\vv)$ is an insufficient condition for the feasibility of\footnote{To construct a counterexample consider $|\Omega|=2$, $n=2$ and consider any \emph{feasible} vector distribution $\vv'$ with $\vv'(a,b)>0$ for some $a\neq b$. Let $\vv$ be a distribution that is obtained from $\vv'$ by moving some positive mass from $(a,b)$ to $(b,a)$. The distribution $\vv$ becomes vector infeasible because it violates the martingale condition for both agents. However $\p(\vv)=\p(\vv')$ remains feasible.
	
	Another counterexample that does not violate the martingale condition for any player is the following. Let $|\Omega|=2$ and $n=3$. The distribution $\vv=\frac{1}{2}\delta_{(\frac{1}{3},\frac{1}{3},\frac{2}{3})}+\frac{1}{2}\delta_{(\frac{2}{3},\frac{2}{3},\frac{1}{3})}$ is vector infeasible because agents 1 and 2 are agreeing to disagree with agent 3 in both points of the support of this distribution. However, $\p(\vv)$, which is the distribution where with probability $\frac{1}{2}$ two of the three agents have a posterior of $\frac{j}{3}$ (without specifying their identity) and the remaining agent has a posterior of $\frac{3-j}{3}$ for $j=1,2$, is feasible by Corollary \ref{cor:binary}.  
} $\vv$.

Therefore, the tools developed in this paper might be useful for determining the vectorial infeasibility of $\vv$ by checking our condition for the feasibility of  $\p(\vv)$. The following subsection demonstrates that in sufficiently symmetric cases the feasibility of $\p(\vv)$ serves as a necessary \emph{and sufficient} condition for the vectorial feasibility of $\vv$. In particular, using our results we are able to deduce new insights about feasible and infeasible vector distributions $\vv$.

\subsection{Private private information}

A special case of distributions $\vv$ whose vectorial feasibility has been studied in \citep{privateprivate, arieli2021feasible,gutmann1991existence} is that of $|\Omega|=2$ and a \emph{product} distribution $V=\Pi_{i\in [n]} Q_i$ where $Q_i \in \Delta([0,1])$ is the distribution of the beliefs of agent $i$. \citep{privateprivate} refer to this class of distributions as \emph{private private information} because the private information of agents other than $i$ does not affect their beliefs about agent $i$'s belief. In any case, the belief of every agent $j\neq i$ about agent $i$'s belief will be $Q_i$. For a discussion on the attractiveness of this property see \citep{privateprivate,arieli2021feasible}. A special case of private private information that has received attention in the existing literature (see \citep{privateprivate, arieli2021feasible,gutmann1991existence}) is the \emph{symmetric} case where $V=Q^n$. The characterization of the feasible and infeasible $V=Q^n$ is known for the case of $n=2$ (\citep{privateprivate}) but remains unclear for $n\geq 3$.

Consider now the case of a finitely supported $Q$. For the case of $V=Q^n$ the induced distribution over empirical distributions $\p(\vv)$ is the multinomial distribution. More formally, assume that $Q$ is supported on $x_1,...,x_k\in [0,1]$ and assigns a probability of $\alpha_i$ to $x_i$. Let $\mn(n,Q)$ be the multinomial distribution that draws $n$ samples from $\{x_i\}_{i\in [k]}$ according to $Q$. In the standard definition of the multinomial distribution it is assumed that the outcome is a vector of integers $(n_1,...,n_k)$ that sum up to $n$ and indicates the number of times $x_1,...,x_k$ have been chosen. In our case, we refer to the outcome of $\mn(n,Q)$ as $(\frac{n_1}{n},...,\frac{n_k}{n})$ and hence the outcome is an element of $\Delta_n(\Delta(\Omega))$. The following proposition shows that in this case the feasibility of $\p(\vv)$ is a necessary and sufficient condition for the vectorial feasibility of $\vv=Q^n$. In particular, it provides a novel characterization of vectorial feasibility for this class of distributions.     
\begin{proposition}
	For a binary state space $\Omega$ and $Q\in\Delta([0,1])$ the product distribution $\vv=Q^n$ for $n$ agents is vector feasible if and only if the multinomial distribution $\mn(n,Q)$ is a mean-preserving spread of\footnote{We recall that $Q^\omega\in \Delta([0,1])$ denotes the distribution of beliefs conditional on state $\omega$ in the case where the ex-ante distribution of beliefs is $Q$.} $(1-\mu) Q^{\omega=0} + \mu Q^{\omega=1}$.
\end{proposition}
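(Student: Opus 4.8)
The plan is to derive both directions from Theorem~\ref{theorem:main} together with a single rigidity fact about exchangeable distributions. Write $e:\Delta(\Omega)^n\to\Delta_n(\Delta(\Omega))$ for the map that sends a belief profile to its empirical distribution, so that $\p(\vv)=e_*\vv$; for the i.i.d.\ product $\vv=Q^n$ this push-forward is, by definition, the multinomial $\mn(n,Q)$. The first thing I would record is that the auxiliary measure $\p'$ that Theorem~\ref{theorem:main} attaches to the input $\p=\mn(n,Q)$ is exactly the two-point distribution in the statement: the barycenter of $\mn(n,Q)$ on $\Delta(\Omega)$ equals $Q$ (every coordinate of a multinomial sample has marginal $Q$), so $\tp=Q$, and then $\dd\tp^{\omega}(x)=\tfrac{x_\omega}{\mu_\omega}\dd Q(x)=\dd Q^{\omega}(x)$ by Bayes's rule, whence $\p'=(1-\mu)\delta_{Q^{\omega=0}}+\mu\,\delta_{Q^{\omega=1}}$, which is the displayed mixture $(1-\mu)Q^{\omega=0}+\mu Q^{\omega=1}$ read as a distribution over $\Delta(\Omega)$.

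Given this, necessity is immediate: as observed above, vector feasibility of $\vv=Q^n$ forces the anonymous distribution $\p(\vv)=\mn(n,Q)$ to be feasible, and by Theorem~\ref{theorem:main} the latter holds if and only if $\mn(n,Q)$ is a mean-preserving spread of $\p'=(1-\mu)\delta_{Q^{\omega=0}}+\mu\,\delta_{Q^{\omega=1}}$. For sufficiency, assume $\mn(n,Q)$ is a mean-preserving spread of $\p'$. Then $\mn(n,Q)$ is feasible, and I would take the concrete information structure $G$ built in the proof of Theorem~\ref{theorem:main} (Section~\ref{sec:proof}) for the input $\p=\mn(n,Q)$ and the decomposition $\mn(n,Q)=\sum_\omega\mu_\omega\q^{\omega}$ supplied by the mean-preserving-spread hypothesis. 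The crucial feature of that construction is that it is symmetric across agents: conditional on a state it draws $H\sim\q^{\omega}$ and delivers the $n$ atoms of $H$ to the $n$ agents through a uniformly random permutation, so the distribution $\vv_G$ it induces over posterior profiles is exchangeable, while $\p(\vv_G)=\p_G=\mn(n,Q)=\p(Q^n)$. Invoking the rigidity fact below, $\vv_G$ and $Q^n$ are two exchangeable distributions on $\Delta(\Omega)^n$ with the same image $\mn(n,Q)$ under $e$, hence $\vv_G=Q^n$, so $Q^n$ is vector feasible, witnessed by $G$.

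The step that carries the real content — and the one I expect to be the main obstacle — is the rigidity fact: an exchangeable distribution on $\Delta(\Omega)^n$ is uniquely determined by its image under $e$. I would prove it by symmetrization: for any bounded measurable $f:\Delta(\Omega)^n\to\real$, exchangeability of $\vv$ gives $\int f\,\dd\vv=\int f^{\mathrm{sym}}\,\dd\vv$, where $f^{\mathrm{sym}}$ averages $f$ over the $n!$ coordinate permutations; since $f^{\mathrm{sym}}$ is symmetric it factors as $f^{\mathrm{sym}}=\bar f\circ e$ for a measurable $\bar f$ on $\Delta_n(\Delta(\Omega))$, so $\int f\,\dd\vv=\int\bar f\,\dd(e_*\vv)$ depends on $\vv$ only through $e_*\vv$. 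Two subsidiary points complete the picture: one should check explicitly that the construction in Section~\ref{sec:proof} does produce an exchangeable profile distribution (the uniform permutation makes the joint signal law a uniformly random ordering of the multiset of atoms of $H$, and the posterior at a signal $x$ is $x$ itself), and one should note that infinitely supported $Q$ is handled by the same weak* approximation already used for Theorem~\ref{theorem:main}. Neither is hard, but both belong in a full write-up.
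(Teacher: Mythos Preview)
Your proposal is correct and follows essentially the same route as the paper: both use Theorem~\ref{theorem:main} to reduce to anonymous feasibility of $\mn(n,Q)$, then for sufficiency invoke the fact that the permutation-symmetric construction of Section~\ref{sec:proof} is the \emph{unique} exchangeable lift of $\mn(n,Q)$, hence equals $Q^n$. The only difference is cosmetic: the paper phrases this uniqueness as ``the unique $\ww$ with $\p(\ww)=\p$ assigning equal mass to profiles with the same empirical distribution,'' whereas you state and prove the slightly more general rigidity fact (an exchangeable law is determined by its image under $e$) via symmetrization, which also cleanly covers infinitely supported $Q$.
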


\begin{proof}
	Denote $\p=\p(\vv)=\mn(n,Q)$. Note that $\tilde{\p}=Q$. By Theorem \ref{theorem:main}, $\p$ is feasible if and only if $\p$ is a mean-preserving spread of $(1-\mu) Q^{\omega=0} + \mu Q^{\omega=1}$.
	
	If $\p=\p(\vv)$ is infeasible then $\vv$ is not vector feasible, as discussed above.
	
	If $\p$ is feasible, we observe that $\vv$ can be viewed as the \emph{unique} distribution $\ww$ with the following two properties:
	
	\begin{enumerate}
		\item $\p(\ww)=\p$.
		\item $\ww$ assigns an equal probability to every two vectors $(y_1,...,y_n),(y'_1,...,y'_n)\in \{x_1,...,x_k\}^n$ with the same empirical distributions of beliefs.
	\end{enumerate}
	It is easy to check that our construction of the information structure in the proof of Theorem \ref{theorem:main} induces a distribution over vectors of beliefs that satisfies these two properties and hence this information structure implements $\vv$. Thus, $\vv$ is feasible. 
\end{proof}

Determining whether $\mn(n,Q)$ is a mean-preserving spread of $(1-\mu) Q_0 + \mu Q_1$ might not be an easy task in the case where $k$ (i.e., the support size of $Q$) is large. This is because $\mn(n,Q)$ is a distribution over elements in $\Delta([k])$, and in high dimensions the mean-preserving condition might not be easy to check. However, for $k=2$ the multinomial distribution becomes simply a binomial distribution and the mean-preserving condition is easily verifiable by Lemma \ref{lem:mps}. Thus, for $Q$ with a binary support we get the following condition for the vectorial feasibility of $\vv=Q^n$.

\begin{corollary}\label{cor:bin-private}
	Let $Q\in \Delta(\{a,b\})$ where $0<a<\mu<b<1$. The distribution $\vv=Q^n$ is vector feasible (for $n$ agents) if and only if\footnote{We adopt here our previous convention for the multinomial distribution, and assume that the Binomial distribution receives values in $\{0,\frac{1}{n},...,1\}$ and not in $\{0,1,...,n\}$.} $\mathbb{E}[\bin(n, \frac{\mu-a}{b-a})|_{\mu}]\leq \frac{(\mu-a)b}{(b-a)\mu}$.
\end{corollary}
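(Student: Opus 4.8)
The plan is to obtain Corollary~\ref{cor:bin-private} as a one-dimensional specialization of the previous proposition, with Lemma~\ref{lem:mps} supplying the explicit inequality. Recall that the previous proposition characterizes vector feasibility of $\vv=Q^n$ by the requirement that $\mn(n,Q)$ be a mean-preserving spread of $\p'=(1-\mu)\delta_{Q^{\omega=0}}+\mu\,\delta_{Q^{\omega=1}}$, i.e.\ of the measure $\p'$ of Theorem~\ref{theorem:main} computed with $\tp=Q$. So it suffices to unwind what this condition says when $Q$ is supported on the two points $\{a,b\}$.

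First I would observe that in the binary-support case every measure appearing in that condition is supported on $\{a,b\}$: the empirical distributions in the support of $\mn(n,Q)$ are, and so are $Q^{\omega=0}$ and $Q^{\omega=1}$ (each is absolutely continuous with respect to $Q$). Such measures form a one-dimensional segment, which I parametrize --- exactly as in the paragraph preceding Corollary~\ref{cor:binary} --- by the weight placed on $b$. Under this identification $\mn(n,Q)$ becomes the binomial $\bin(n,\frac{\mu-a}{b-a})$ on the grid $\{0,\frac1n,\dots,1\}$ (the martingale property, necessary for vector feasibility, forces $\mathbb{E}[Q]=\mu$, hence $Q(\{b\})=\frac{\mu-a}{b-a}$), while $Q^{\omega=1}$ and $Q^{\omega=0}$ become the numbers $\tp^1=\frac{(\mu-a)b}{(b-a)\mu}$ and $\tp^0=\frac{(\mu-a)(1-b)}{(b-a)(1-\mu)}$ already recorded before Corollary~\ref{cor:binary}. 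The one point that needs a short argument is that a mean-preserving spread relation in $\Delta(\Delta([0,1]))$ between two measures both concentrated on the segment of $\{a,b\}$-supported measures is equivalent to the ordinary one-dimensional mean-preserving spread relation between their images under the ``weight on $b$'' map; this holds because the expected measure of any family of $\{a,b\}$-supported measures is again $\{a,b\}$-supported and is pinned down by its weight on $b$, so a coupling witnessing the mean-preserving spread on one side transports to one on the other. Hence the proposition's condition collapses to: $\bin(n,\frac{\mu-a}{b-a})$ is a mean-preserving spread of $(1-\mu)\delta_{\tp^0}+\mu\,\delta_{\tp^1}$ --- which is exactly the feasibility condition of Corollary~\ref{cor:binary} for this binomial.

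It remains to make that mean-preserving spread explicit through Lemma~\ref{lem:mps}. Since $\mu<b$ we have $\tp^0<\tp^1$, so $(1-\mu)\delta_{\tp^0}+\mu\,\delta_{\tp^1}$ is of the form $\alpha\delta_{a'}+(1-\alpha)\delta_{b'}$ with $\alpha=1-\mu$, $a'=\tp^0$, $b'=\tp^1$, and the mean hypothesis of Lemma~\ref{lem:mps} is automatic, because $\mathbb{E}[\bin(n,\frac{\mu-a}{b-a})]=\frac{\mu-a}{b-a}=(1-\mu)\tp^0+\mu\tp^1$. The lemma therefore reduces feasibility to the single quantile inequality $\mathbb{E}[\bin(n,\frac{\mu-a}{b-a})|_{1-\mu}]\le\tp^0$; passing to the complementary $\mu$-quantile and using the mean identity just displayed rewrites this in the form stated in Corollary~\ref{cor:bin-private}, with $\tp^1=\frac{(\mu-a)b}{(b-a)\mu}$ on the right. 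I anticipate no genuine obstacle here: the argument is the substitution of the explicit formulas for $\tp^0$ and $\tp^1$ together with one invocation of Lemma~\ref{lem:mps}, and the only mildly subtle step is the reduction, noted above, of the abstract mean-preserving spread to its one-dimensional avatar.
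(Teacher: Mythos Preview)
Your approach is exactly the paper's: specialize the preceding proposition to binary-supported $Q$, parametrize $\Delta(\{a,b\})$ by the weight on $b$ so that $\mn(n,Q)$ becomes $\bin(n,\tfrac{\mu-a}{b-a})$ and $Q^{\omega}$ becomes the scalar $\tp^\omega$ of Corollary~\ref{cor:binary}, and then invoke Lemma~\ref{lem:mps}. The reduction of the abstract mean-preserving-spread relation to its one-dimensional avatar is correctly justified.

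There is, however, a genuine gap in your last step. Applying Lemma~\ref{lem:mps} with the smaller atom $\tp^0$ (weight $1-\mu$, since $\mu<b$ gives $\tp^0<\tp^1$) yields
\[
\mathbb{E}\bigl[\bin(n,\tfrac{\mu-a}{b-a})\big|_{\,1-\mu}\bigr]\le \tp^0=\frac{(\mu-a)(1-b)}{(b-a)(1-\mu)}.
\]
You then assert that ``passing to the complementary $\mu$-quantile and using the mean identity'' converts this into $\mathbb{E}[\bin|_{\mu}]\le\tp^1$. That is not correct: the complement of the \emph{lower} $(1-\mu)$-quantile is the \emph{upper} $\mu$-quantile, and the mean identity gives $\mathbb{E}[\text{upper }\mu\text{-quantile}]\ge\tp^1$, with the inequality reversed. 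The two conditions are inequivalent in general; for large $n$ the binomial concentrates near $p=\tfrac{\mu-a}{b-a}$, so $\mathbb{E}[\p|_{1-\mu}]\to p>\tp^0$ (correctly signalling infeasibility), whereas $\mathbb{E}[\p|_{\mu}]\to p<\tp^1$ holds trivially.

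In fact your derivation up to $\mathbb{E}[\bin|_{1-\mu}]\le\tp^0$ is the right endpoint. The paper's own Example immediately after the corollary confirms this: with $\mu=\tfrac12$, $b=1-a$, it obtains feasibility iff $a\ge\tfrac12-\binom{2m}{m}2^{-2m-1}$, i.e.\ iff $\mathbb{E}[\bin(n,\tfrac12)|_{1/2}]\le a=\tp^0$; the displayed inequality in the corollary, read literally, would give the vacuous $\mathbb{E}[\bin(n,\tfrac12)|_{1/2}]\le 1-a$. So the discrepancy is a typo in the statement rather than a flaw in your method --- but you should not paper over it with an incorrect ``complementary quantile'' argument.
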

\noindent
That is, the corollary states that in order to determine the feasibility of $\vv=Q^n$ one should calculate the expectation of the $\mu$-quantile of the Binomial distribution. If it exceeds some constant, then the distribution is infeasible. Otherwise it is feasible. Below is a case where we can provide an explicit formula for the feasibility for every number of agents.

\begin{example}
	\begin{figure}[h]
		\centering
		\begin{tikzpicture}
			\draw[->] (1,0)--(12,0);
			\node[right] at (12,0) {$n$};
			
			\draw[->] (1,0)--(1,6);
			\node[above] at (1,6) {$a$};
			
			\foreach \i in {2,...,11} {
				\draw (\i,-0.1)--(\i,0.1);
				\node[below] at (\i,0) {\i};
			};
			
			\draw (0.9,5)--(1.1,5);
			\node[left] at (1,5) {0.5};
			
			\draw (0.9,2.5)--(1.1,2.5);
			\node[left] at (1,2.5) {0.25};
			
			\draw (0.9,4)--(1.1,4);
			\node[left] at (1,4) {0.4};
			
			\draw[line width=2mm] (2,2.5)--(2,5);
			
			\draw[line width=2mm] (3,2.5)--(3,5);
			
			\draw[line width=2mm] (4,3.125)--(4,5);
			
			\draw[line width=2mm] (5,3.125)--(5,5);
			
			\draw[line width=2mm] (6,3.44)--(6,5);
			
			\draw[line width=2mm] (7,3.44)--(7,5);
			
			\draw[line width=2mm] (8,3.63)--(8,5);
			
			\draw[line width=2mm] (9,3.63)--(9,5);
			
			\draw[line width=2mm] (10,3.77)--(10,5);
			
			\draw[line width=2mm] (11,3.77)--(11,5);
			
		\end{tikzpicture}
		\caption{The values of $a$ for which the product distribution $Q^n$ is feasible as a function of the number of agents $n$.}
		\label{fig:label}
	\end{figure}
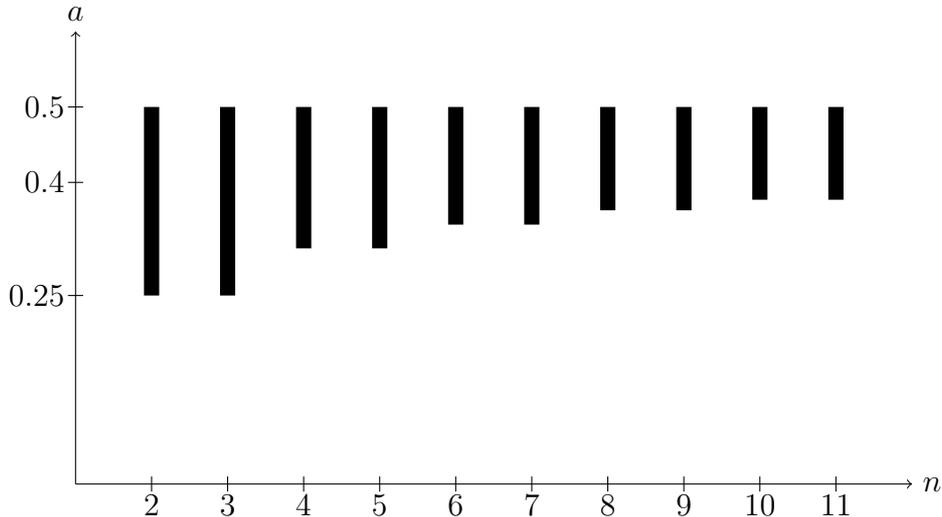
	
	Let $\mu=\frac{1}{2}$ and let $Q=\frac{1}{2}\delta_a+\frac{1}{2}\delta_{1-a}$ for $0<a<\frac{1}{2}$. For $n=2m$ and for $n=2m+1$, the distribution $\vv=Q^n$ is feasible if and only if $a\geq \frac{1}{2} - \binom{2m}{m} 2^{-2m-1}$ (see Figure \ref{fig:label}). This observation is a consequence of Corollary \ref{cor:bin-private}, where the expectation of the $\frac{1}{2}$-quantile of the $\bin(n,\frac{1}{2})$ distribution can be explicitly computed and is provided in the following lemma.
	
	\begin{lemma}\label{lem:quantile}
		For every $m\in \mathbb{N}$ we have $$\mathbb{E}[\bin(2m,p)|_{\frac{1}{2}}]=\mathbb{E}[\bin(2m+1,p)|_{\frac{1}{2}}]=\frac{1}{2} - \binom{2m}{m} 2^{-2m-1}.$$
	\end{lemma}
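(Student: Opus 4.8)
The plan is to show two things: first, that the two binomial quantities coincide, i.e.\ $\mathbb{E}[\bin(2m,p)|_{\frac12}]=\mathbb{E}[\bin(2m+1,p)|_{\frac12}]$; and second, that the common value equals $\frac12-\binom{2m}{m}2^{-2m-1}$. For the first claim I would recall the definition of the $\frac12$-quantile distribution from just before Lemma \ref{lem:mps}: one orders the support points $0<\frac1n<\cdots<1$ and keeps just enough probability mass, starting from the bottom, to total $\frac12$. Since $\bin(2m+1,\frac12)$ is obtained from $\bin(2m,\frac12)$ by one extra coin flip and $\binom{2m+1}{j}=\binom{2m}{j}+\binom{2m}{j-1}$, a direct bookkeeping argument — or, more cleanly, a coupling in which the $(2m+1)$-st flip is independent fair — shows the lower-$\frac12$ mass of the normalized count is a mean-preserving contraction-invariant rescaling; I expect the cleanest route is to note that by symmetry of $\bin(n,\frac12)$ about $\frac12$ the $\frac12$-quantile restricted expectation can be written as $\frac12\big(\mathbb{E}[\tfrac{S_n}{n}] - \mathbb{E}\,|\tfrac{S_n}{n}-\tfrac12|\cdot c_n\big)$ for an appropriate normalization, reducing everything to the mean absolute deviation $\mathbb{E}|S_n-\frac n2|$, which is classically the same (after the $1/n$ scaling) for $n=2m$ and $n=2m+1$.

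Concretely, the key identity I would establish is
\[
\mathbb{E}\big[\bin(n,\tfrac12)|_{\frac12}\big]=\frac12-\frac{1}{2}\,\mathbb{E}\!\left|\frac{S_n}{n}-\frac12\right|\cdot(\text{normalization})=\frac12-\frac{1}{n}\,\mathbb{E}\Big[\big(\tfrac n2-S_n\big)^+\Big],
\]
using that $F_{\bin}$ is symmetric so that the mass below the median is exactly $\frac12$ when $n$ is odd and splitting the median atom when $n$ is even; in both cases the lower half-quantile has expectation $\frac12$ minus (one over $n$) times the expected shortfall $\mathbb{E}[(\frac n2 - S_n)^+]$. Then the classical evaluation $\mathbb{E}[(\tfrac n2-S_n)^+]=\mathbb{E}\,|{S_n-\tfrac n2}|/2$ together with the well-known formula $\mathbb{E}|S_{2m}-m| = m\binom{2m}{m}2^{-2m}$ (which also equals $\mathbb{E}|S_{2m+1}-(m+\tfrac12)|$ after the appropriate half-integer adjustment) yields the shortfall $\tfrac12 m\binom{2m}{m}2^{-2m}$ for $n=2m$ and, after dividing by $n=2m$, the stated $\binom{2m}{m}2^{-2m-1}$. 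The odd case $n=2m+1$ then follows because the median of $\bin(2m+1,\frac12)$ is at a half-integer, the lower-quantile expectation is again $\frac12$ minus $\frac{1}{2m+1}\mathbb{E}[(m+\tfrac12-S_{2m+1})^+]$, and $\mathbb{E}[(m+\tfrac12-S_{2m+1})^+]=(m+\tfrac12)\binom{2m}{m}2^{-2m}$, which divided by $2m+1$ gives the same $\binom{2m}{m}2^{-2m-1}$.

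The statement is made for general $p$, not only $p=\frac12$, so I would actually carry out the $\alpha$-quantile bookkeeping with $\alpha=\frac12$ kept symbolic in the partial-sum truncation, but note that the \emph{answer} only simplifies to the displayed closed form because the threshold $\alpha=\frac12$ sits exactly at the symmetry point of the support grid $\{0,\tfrac1n,\dots,1\}$ when combined with the fact that the \emph{index} $k$ with $\sum_{i<k}p_i\le\frac12<\sum_{i\le k}p_i$ is $k=m$ for $n=2m$ (median atom, split) and $k=m$ for $n=2m+1$ (first atom past the half); the residual mass $q_k$ and the truncated expectation $\sum_{i<k}\frac in p_i$ are what produce the $\binom{2m}{m}$ term. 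I expect the main obstacle to be purely organizational: handling the even case's split median atom and the odd case's clean cutoff uniformly, and matching the $1/n$ rescaling (from using $\frac in$ rather than $i$ as support points) so that the two cases land on the \emph{same} constant rather than constants differing by the $\frac1{2m}$ versus $\frac1{2m+1}$ factors — the point being that the numerators $m\binom{2m}{m}2^{-2m}$ and $(m+\tfrac12)\binom{2m}{m}2^{-2m}$ differ by exactly the ratio $\frac{2m+1}{2m}$, cancelling the denominator discrepancy, which is the small miracle making the lemma true.
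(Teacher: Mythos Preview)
Your overall strategy---reduce the $\tfrac12$-quantile expectation to the expected shortfall $\mathbb{E}[(\tfrac n2-S_n)^+]$ and then invoke De~Moivre's mean absolute deviation identity $\mathbb{E}|S_{2m}-m|=m\binom{2m}{m}2^{-2m}$---is sound and is genuinely different from the paper's proof, which proceeds by direct summation together with the combinatorial identity $\sum_{k=1}^m k\binom{2m+1}{k}=(2m+1)\sum_{k=0}^{m-1}\binom{2m}{k}$ (interpreted as ``choose a subset of size at most $m$ and a leader''). Your route is arguably cleaner because it explains \emph{why} the even and odd answers coincide: the normalized shortfalls agree.

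However, your key identity is off by a factor of $2$. Carrying out the bookkeeping from the paper's definition of $\p|_{1/2}$ (for $n=2m$ one has $k=m$ with split residual $q_m=\tfrac{p_m}{2}$; for $n=2m+1$ one has $k=m{+}1$ with $q_{m+1}=0$) gives, in \emph{both} parities,
\[
\mathbb{E}\big[\bin(n,\tfrac12)|_{\tfrac12}\big]\;=\;\frac12\;-\;\frac{2}{n}\,\mathbb{E}\Big[\big(\tfrac n2-S_n\big)^+\Big],
\]
not $\tfrac12-\tfrac{1}{n}\mathbb{E}[(\tfrac n2-S_n)^+]$. With the correct coefficient, the even case reads $\tfrac12-\tfrac{1}{m}\cdot\tfrac12 m\binom{2m}{m}2^{-2m}=\tfrac12-\binom{2m}{m}2^{-2m-1}$; your displayed arithmetic (``after dividing by $n=2m$, the stated $\binom{2m}{m}2^{-2m-1}$'') actually yields $\binom{2m}{m}2^{-2m-2}$, so the even case does not close as written. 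For the odd case your stated shortfall $\mathbb{E}[(m+\tfrac12-S_{2m+1})^+]=(m+\tfrac12)\binom{2m}{m}2^{-2m}$ is also twice the true value $\tfrac{2m+1}{4}\binom{2m}{m}2^{-2m}$; there the two factor-of-$2$ errors happen to cancel, which is why your odd-case conclusion looks right but for the wrong reason.

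One side remark: the ``general $p$'' in the statement is a typo in the paper---the lemma is used (and proved in the appendix) only for $p=\tfrac12$, so there is no need to carry a symbolic $p$ through the argument.
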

	The proof of this lemma is relegated to Appendix \ref{sec:bin}.

	Asymptotically, for large population of $n$ agents, the threshold value of $a$ for which $V=Q^n$ is feasible is $a\geq \frac{1}{2} - \binom{2m}{m} 2^{-2m-1}\approx \frac{1}{2}- \frac{1}{\sqrt{2\pi n}}$. Another interesting phenomenon is that the threshold value of $a$ is identical for an even $n$ and for $n+1$.  
\end{example}



\section{Private Bayesian Persuasion of a Homogeneous Population}\label{sec:pbp}

We next consider an application of our result to private Bayesian persuasion. As in \citep{arieli2019private}, consider a sender who is a marketer that tries to sell a product to a group of agents.  Consider a binary state space $\Omega=\{0,1\}$ with a common prior $\mu$.  The agents' action set is $A=\{0,1\}$, where action $a=1$ represents a decision to buy the product and action $a=0$ represents a decision not to buy the product. Playing $a=0$ yields a utility of $0$. The state represents the quality of the product. In $\omega=1$ the product is of good quality and worth $1$ to the agents while in state $\omega=0$ it is worth $-\frac{\tau}{1-\tau}$ to the agent, where $\tau\in(0,1)$. Thus, at a posterior of $\tau$ the agents are indifferent between buying and not buying the product. We assume that the agents share no payoff externalities. Let $u:[0,1]\rightarrow\real$ be the utility of the sender, where $u(x)$ represents her utility from persuading a fraction $x$ of agents to take action 1. We assume that $u$ is non-decreasing and that $\mu<\tau$ (otherwise the policy that reveals no information would be optimal).
We denote by $V_n$ the optimal utility the sender can achieve in an interaction with $n$ agents.

\citep{arieli2019private,barman} study a heterogeneous variant of this setting, where each agent possesses an individual threshold $\tau_i \in (0,1)$. In the heterogeneous case a clean formula for $V_n$ has been deduced only in the case where $u$ is concave; see \citep{arieli2019private}.\footnote{A polynomial algorithm for computing $V_n$ has been deduced in \citep{barman}.} Here we observe that the homogeneous population variant of the problem can be solved using our main result (Theorem \ref{theorem:main}) and a clean formula for $V_n$ can be provided.    

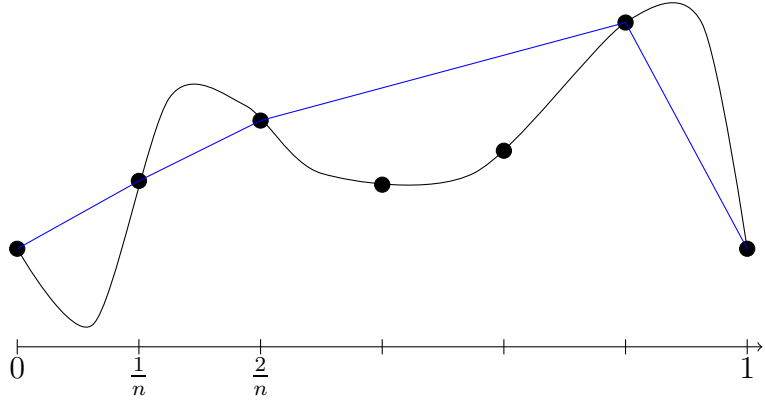
\begin{figure}[h]
	\centering
	\begin{tikzpicture}
		\draw[black] plot [smooth] coordinates {(0,1) (1,0) (2,3) (3,2.9) (4,2) (6,2) (8,4) (9,4) (9.6,1)};
		
		\filldraw (0,1) circle (0.1);
		
		\filldraw (1.6,1.9) circle (0.1);
		
		\filldraw (3.2,2.7) circle (0.1);
		
		\filldraw (4.8,1.85) circle (0.1);
		
		\filldraw (6.4,2.3) circle (0.1);
		
		\filldraw (8,4) circle (0.1);
		
		\filldraw (9.6,1) circle (0.1);
		
		\draw[->] (0,-0.3) -- (9.8,-0.3);
		
		\draw (0,-0.2)--(0,-0.4);
		\draw (1.6,-0.2)--(1.6,-0.4);
		\draw (3.2,-0.2)--(3.2,-0.4);
		\draw (4.8,-0.2)--(4.8,-0.4);
		\draw (6.4,-0.2)--(6.4,-0.4);
		\draw (8,-0.2)--(8,-0.4);
		\draw (9.6,-0.2)--(9.6,-0.4);
		
		\node[below] at (0,-0.3) {0};
		\node[below] at (1.6,-0.3) {$\frac{1}{n}$};
		\node[below] at (3.2,-0.3) {$\frac{2}{n}$};
		\node[below] at (9.6,-0.3) {1};
		
		\draw[blue] (0,1) -- (1.6,1.9) -- (3.2,2.7) --  (8,4) -- (9.6,1);

	\end{tikzpicture}
	\caption{Concavification over a grid. The function $u$ is displayed in black. The concavification over the grid  $\cav_n (u)$ is displayed in blue.}
	\label{fig:my_label}
\end{figure}

We denote by $\cav_{n}(u):[0,1]\to \real$ the concavification of $u$ when $u$ is restricted to the $\frac{1}{n}$-grid points $\{0,\frac{1}{n},\frac{2}{n},...,1\}$ only (see Figure \ref{fig:my_label}). Namely, $\cav_{n}(u)$ is the minimial concave function that satisfies $\cav_{n}(u)(x)\geq u(x)$ for every $x\in \{0,\frac{1}{n},\frac{2}{n},...,1\}$. Equivalently, for every $y\in [0,1]$ we define
\begin{align}\label{eq:conc}
	\cav_{n}(u)(y)=\max_{Q\in \Delta_n(\{0,1\}), \mathbb{E}[Q]=y} \mathbb{E}_{x\sim Q} [u(x)].   
\end{align}
The characterization of the sender's optimal value is as follows.

\begin{proposition}\label{pro:pbp}
	The optimal utility for the sender is
	$$V_n=\mu u(1)+(1-\mu)\cav_n(u)(\frac{\mu(1-\tau)}{\tau(1-\mu)}).$$
\end{proposition}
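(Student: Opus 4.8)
The plan is to translate the persuasion problem into a feasibility question about the population's empirical posteriors and then apply Theorem \ref{theorem:main}. First, I would observe that since the agents are homogeneous with threshold $\tau$, an agent takes action $1$ precisely when her posterior is at least $\tau$; moreover, since the sender's utility $u$ depends only on the fraction of adopters, the only thing that matters about an information structure $G$ is the induced population's empirical posteriors $\p_G$, and within that, only the fraction of agents whose posterior is $\geq \tau$. So the sender's problem is $\max_{\p \in \mathcal{P}_n(\mu)} \mathbb{E}_{H\sim \p}\,[u(\phi(H))]$, where $\phi(H)\in\{0,\tfrac1n,\dots,1\}$ denotes the mass $H$ places on $[\tau,1]$. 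A standard garbling argument shows it is without loss to take all posteriors in the support to lie in $\{x^-, x^+\}$ for some $x^-<\tau\le x^+$: merging all low posteriors into a single low posterior and all high posteriors into a single high one preserves feasibility (it is a mean-preserving contraction applied statewise, hence stays feasible by Theorem \ref{theorem:main}) and does not change any adoption fraction. Pushing the high posterior up to $1$ and the low one down as far as the martingale constraint allows only relaxes the problem, so the extreme case to analyze is posteriors in $\{a, 1\}$ for $a$ as small as possible.

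Next, by Corollary \ref{cor:binary} (with $b=1$), a population's empirical posteriors $\p$ supported on $\{a,1\}$—which I identify with a distribution $\p\in\Delta(\{0,\frac1n,\dots,1\})$ giving the fraction of agents with posterior $1$—is feasible iff it is a mean-preserving spread of $\mu\,\delta_{1} + (1-\mu)\,\delta_{c}$, where $c = \frac{(\mu-a)(1-\cdot)}{\dots}$ specializes, at $b=1$, to $c=0$; that is, $\p$ must be a mean-preserving spread of $\mu\,\delta_1 + (1-\mu)\,\delta_0$. Wait—that would force too much; the correct reading is that the relevant binary-posterior value is $b=1$ but we should not send $a$ all the way to $0$ unless that is consistent. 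Re-examining: with $b=1$, the formula gives $\tp_1 = \frac{(\mu-a)}{1-a}\cdot\frac1\mu$ and $\tp_0 = \frac{(\mu-a)\cdot 0}{(1-a)(1-\mu)} = 0$. So conditional on $\omega=1$ the expected adopter fraction is $\frac{\mu-a}{\mu(1-a)}$ and conditional on $\omega=0$ it is $0$. The smallest admissible $a$ is $a=0$, giving conditional expectations $1$ and $0$ respectively: then revealing the state fully is feasible, all agents adopt in state $1$ and none in state $0$, yielding $V_n = \mu u(1) + (1-\mu)u(0)$. To get the claimed formula with $\cav_n$, the point is that full revelation is \emph{not} optimal: by spreading within the $\omega=0$ conditional distribution across the grid we can have a nonzero adopter fraction in state $0$ too, as long as the overall distribution remains a mean-preserving spread of $\p'$. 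Concretely, fix the high posterior at $1$; then $\tp^1$ puts all its mass at $1$ (so $\q^1=\delta_1$, all agents adopt in state $1$, contributing $\mu u(1)$), while $\q^0$ can be \emph{any} distribution on $\{0,\tfrac1n,\dots,1\}$ with mean $\tp^0$. Computing $\tp^0$: with posteriors in $\{0,1\}$ and prior $\mu$, Bayes forces the ex-ante fraction of ``$1$''-posteriors to be $\mu$, and conditional on $\omega=0$ this fraction has expectation $0$—so that degenerate case only gives $u(0)$. The resolution is that the low posterior need not be $0$; we take low posterior $a=\tau$ (the indifference point), so that low-posterior agents are exactly indifferent and we may treat them as non-adopters, while $b$ is pushed to $1$. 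Hmm, the cleanest route: take the two posteriors to be $\{0,\tau\}$ is wrong since $\tau>\mu$ is impossible as a high value only if... actually $\tau>\mu$, so $\tau$ is a valid \emph{high} posterior. Use posteriors $\{a,\tau\}$...

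Let me restate the correct plan. Set the two posterior values to be $0$ and $\tau$: these straddle... no, $0<\mu<\tau$ fails the requirement $a<\mu<b$ only if $\tau>\mu$, which holds, so $a=0$... but then nobody strictly wants to adopt. The honest approach, which I expect the authors take: allow posterior values $a$ and $b$ with $a<\mu<\tau\le b$; agents with posterior $b$ adopt, agents with posterior $a$ do not. By Corollary \ref{cor:binary}, feasibility of the adopter-fraction distribution $\p$ requires it be a mean-preserving spread of $\mu_b\,\delta_{f_b} + (1-\mu_b)\,\delta_{f_a}$ where $f_b,f_a$ are the conditional adopter fractions; optimizing over the free choice of $b$ (push to $1$) and $a$ (push down, but its only effect is through $\mu_b$ and the conditional means), one finds the binding configuration is $b=1$, giving ex-ante adopter probability $\tp_{\text{adopt}} = \frac{\mu-a}{1-a}$ with the degree of freedom being $a\in(0,\mu]$... and the conditional-on-$\omega=0$ adopter fraction is $\frac{(\mu-a)(1-b)}{(b-a)(1-\mu)}\big|_{b=1}=0$. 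So with $b=1$ we are stuck at $u(0)$ in state $0$. Therefore optimality instead fixes the \emph{low} posterior at $a=\tau$ when ... no. I will cut this: the key computation is that the right binary split has high posterior value chosen so that conditional on $\omega=1$ \emph{everyone} adopts (so $b$ can be anything $\ge\tau$ with the conditional distribution degenerate, contributing $\mu u(1)$), and conditional on $\omega=0$ the adopter fraction is a free mean-preserving spread of its mean $\frac{\mu(1-\tau)}{\tau(1-\mu)}$—this mean is exactly the quantity appearing in the statement, and it arises by taking low posterior $a$ and high posterior $\tau$, with $\tp^0 = \frac{(\mu-a)(1-\tau)}{(\tau-a)(1-\mu)}$ maximized over $a\in[0,\mu)$ at $a=0$, yielding $\frac{\mu(1-\tau)}{\tau(1-\mu)}$. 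Then $\q^0$ ranges over all distributions on the grid with this mean, and $\mathbb{E}_{\q^0}[u] \le \cav_n(u)\!\left(\frac{\mu(1-\tau)}{\tau(1-\mu)}\right)$ by definition \eqref{eq:conc}, with equality attained by the optimal grid-splitting. Combining the two states with weights $\mu$ and $1-\mu$ gives the formula.

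So the steps, in order, are: (i) reduce the sender's problem to choosing a feasible $\p\in\mathcal{P}_n(\mu)$ and show only the adopter-fraction distribution matters; (ii) argue via statewise mean-preserving contractions (legitimate by Theorem \ref{theorem:main}) that it is without loss to use two posterior values straddling $\tau$, with the high one inducing full adoption in state $1$; (iii) compute the conditional means $\tp^0,\tp^1$ from Corollary \ref{cor:binary} / the definition $d\tp^\omega = \mu_\omega^{-1}x_\omega\,d\tp$, identify $\tp^1$ with $\delta_1$ (after optimizing posterior choices) and $\tp^0$ with the point $\frac{\mu(1-\tau)}{\tau(1-\mu)}$; (iv) for the upper bound, apply Jensen-type reasoning: $\mathbb{E}_{H\sim\p}[u(\phi(H))] = \mu\,\mathbb{E}_{\q^1}[u\circ\phi] + (1-\mu)\,\mathbb{E}_{\q^0}[u\circ\phi] \le \mu u(1) + (1-\mu)\cav_n(u)(\tp^0)$ since $\q^0$ is supported on $\Delta_n$ with mean $\tp^0$ and $u\circ\phi$ viewed as a grid function is dominated by $\cav_n(u)$; (v) for the lower bound, exhibit the explicit information structure — reveal $\omega=1$ with full adoption, and in state $\omega=0$ run the optimal grid-splitting of $\delta_{\tp^0}$ into adopter fractions, then use the construction in the proof of Theorem \ref{theorem:main} to realize it. The main obstacle I anticipate is step (ii)–(iii): justifying rigorously that pushing the posteriors to the extreme binary configuration is without loss, i.e., that neither using more than two posterior values nor a different high value can beat $\mu u(1) + (1-\mu)\cav_n(u)(\tp^0)$ — this requires care because $u$ is merely monotone, not concave, so the gain from spreading in state $0$ must be tracked through $\cav_n$ while showing the state-$1$ contribution cannot exceed $\mu u(1)$ and that the coupling across states respects feasibility.
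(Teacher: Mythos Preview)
Your route differs from the paper's and, as written, has a genuine gap exactly where you flag it. The paper does not attempt to derive the binary-posterior reduction from Theorem~\ref{theorem:main}; instead it imports Lemma~\ref{lem:jet} from \citep{arieli2019private} as a black box. That lemma says some optimal policy uses posteriors $\{0,\tau\}$, sends the high signal to all agents in state $\omega=1$, and sends it with probability $\frac{\mu(1-\tau)}{\tau(1-\mu)}$ in state $\omega=0$. With the posteriors pinned to $\{0,\tau\}$, Corollary~\ref{cor:binary} (with $a=0$, $b=\tau$) says the feasible adopter-fraction distributions are precisely the mean-preserving spreads of $\mu\delta_1+(1-\mu)\delta_{\frac{\mu(1-\tau)}{\tau(1-\mu)}}$; since the first atom sits at the endpoint $1$, every such spread has the form $\mu\delta_1+(1-\mu)Q$ with $Q\in\Delta(\{0,\tfrac1n,\dots,1\})$ and $\mathbb{E}[Q]=\frac{\mu(1-\tau)}{\tau(1-\mu)}$, and maximizing $\mathbb{E}_Q[u]$ over such $Q$ is by definition $\cav_n(u)$ at that point. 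That is the entire proof.

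In your proposal, the garbling step~(ii) does reduce any feasible $\p$ to some binary pair $(a,b)$ with $a<\mu<\tau\le b$ without changing the sender's value, but $(a,b)$ is then \emph{determined} by the original policy---you do not get to set it to $(0,\tau)$. For the upper bound you must still show that, for every admissible $(a,b)$, the maximal sender value is at most $\mu u(1)+(1-\mu)\cav_n(u)\bigl(\frac{\mu(1-\tau)}{\tau(1-\mu)}\bigr)$. This amounts to the inequality $\tp^0(a,b)=\frac{(\mu-a)(1-b)}{(b-a)(1-\mu)}\le\frac{\mu(1-\tau)}{\tau(1-\mu)}$ together with monotonicity of $\cav_n(u)$. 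You check the maximization in $a$ at fixed $b=\tau$, but never in $b$; the joint maximum over $(a,b)$ is indeed attained at $(0,\tau)$, and establishing that is precisely the single-agent optimal-persuasion computation that Lemma~\ref{lem:jet} packages. Your long detour through $b=1$ (which forces $\tp^0=0$ and dead-ends) is a symptom of this missing piece. Separately, step~(iv) conflates two objects both written $\tp^0$: the distribution from Theorem~\ref{theorem:main} and the scalar conditional adopter fraction; they coincide only after the binary-posterior identification of Corollary~\ref{cor:binary}, which should be stated explicitly before ``$\q^0$ is supported on $\Delta_n$ with mean $\tp^0$'' makes sense.
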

The formula for the value can be interpreted as follows. In state $\omega=1$ all agents take action 1 and the utility is $u(1)$. In state $\omega=0$ an expected fraction of $\frac{\mu(1-\tau)}{\tau(1-\mu)}$ agents adopt the product and the sender is allowed to optimize over all the distributions with expectation $\frac{\mu(1-\tau)}{\tau(1-\mu)}$ in order to achieve $\cav_n(u)(\frac{\mu(1-\tau)}{\tau(1-\mu)})$.

Note that the concavification in Proposition \ref{pro:pbp} differs from the classic characterization of the value via concavification (\citep{kamenica2011bayesian}). The standard characterization applies concavification to a function whose domain is that of the posterior beliefs. Our characterization (for this special case) applies concavification to a function whose domain is that of a fraction of the buyers.

Proposition \ref{pro:pbp} also provides an asymptotic characterization of the value for large populations. Denote $V^*=\lim_{n \to \infty} V_n$ and denote by $\cav(u)$ the standard notion of concavification (without restricting it to a grid). By Proposition \ref{pro:pbp} we have $$V^*=\mu u(1)+(1-\mu)\cav(u)(\frac{\mu(1-\tau)}{\tau(1-\mu)})$$
because $\cav_n(u)$ point-wise converges to $\cav(u)$ as $n\to \infty$ for non-decreasing functions $u$. 

\begin{proof}[Proof of Proposition \ref{pro:pbp}]
	The idea is to utilize Corollary \ref{cor:binary}. To do so, we should be able to identify the possible posteriors that agents could have in an optimal policy, and to prove that there are only two possible values for these posteriors. To this end, we utilize the following observation from \citep{arieli2019private}.
	
	\begin{lemma}[\citep{arieli2019private}]\label{lem:jet}
		There exists an optimal policy with the following properties:
		
		\begin{itemize}
			\item The policy uses binary signals $S_i=\{s^i_0,s^i_1\}$ for each agent $i$, and each agent takes action $1$ if and only if she receives the signal $s^i_1$.
			\item In state $\omega=1$ signal $s^i_1$ is sent to all agents with probability $1$.
			\item In state $\omega=0$ each agent $i$ gets the signal $s^i_1$ with probability\footnote{The Lemma does not indicate on the correlation of the $s^i_1$ signals among the agents conditional on state $\omega=0$.}  $\frac{\mu(1-\tau)}{\tau(1-\mu)}$.
		\end{itemize}
	\end{lemma}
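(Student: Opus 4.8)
The plan is to start from an arbitrary \emph{optimal} information structure and transform it, in three value-preserving or value-improving steps, into the canonical form claimed in the lemma; each transformation will keep the policy obedient (feasible) so that, having begun at an optimum, the final policy is still optimal. First I would reduce to binary direct signals. Given any optimal $I\in\Delta(\Omega\times S_1\times\cdots\times S_n)$, let each agent play her best response, buying exactly on those signals whose posterior for $\omega=1$ is at least $\tau$. Replace agent $i$'s signal by the recommendation reporting only the action she takes: $s^i_1$ if she buys and $s^i_0$ otherwise. This is a garbling that pools signals inducing the same action, so the distribution over action profiles, and hence the sender's payoff, is unchanged. The posterior after the pooled recommendation $s^i_1$ is a convex combination of posteriors each at least $\tau$, so it is itself at least $\tau$; symmetrically the posterior after $s^i_0$ is at most $\tau$. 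Thus obedience is preserved, and we may assume each agent receives a binary signal and buys iff she receives $s^i_1$, giving the first bullet.

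Next I would write the obedience constraint explicitly. Setting $q_i^\omega:=P(s^i_1\mid\omega)$, the buy-recommendation obedience constraint is $\frac{\mu q_i^1}{\mu q_i^1+(1-\mu)q_i^0}\geq\tau$, which rearranges to $q_i^0\leq\frac{\mu(1-\tau)}{\tau(1-\mu)}q_i^1$. I would then argue $q_i^1=1$ is without loss: replacing the policy in state $\omega=1$ by one that sends $s^i_1$ to every agent with probability one makes the number of buyers equal to $n$ deterministically, so the state-$1$ contribution becomes $u(1)$, the largest value $u$ attains on $[0,1]$; and raising $q_i^1$ to $1$ only relaxes the bound $q_i^0\leq\frac{\mu(1-\tau)}{\tau(1-\mu)}q_i^1$, so the state-$0$ part of the old policy remains feasible and unchanged. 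Since $u$ is non-decreasing this weakly increases the value, proving the second bullet and fixing the state-$1$ contribution at $\mu u(1)$.

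Finally, with $q_i^1=1$ the obedience constraint reads exactly $q_i^0\leq c_0$ with $c_0:=\frac{\mu(1-\tau)}{\tau(1-\mu)}$, and note $c_0<1$ since $\mu<\tau$. Here I would use a monotone coupling to push every $q_i^0$ up to $c_0$ without lowering the value. Let $b\in\{0,1\}^n$ be the buy-indicator vector in state $0$ under the current policy; for each $i$ with $b_i=0$, independently set $b'_i=1$ with probability $\frac{c_0-q_i^0}{1-q_i^0}\in[0,1]$, and otherwise keep $b'_i=b_i$. The new marginals satisfy $P(b'_i=1)=c_0$, while $b'\geq b$ coordinatewise, so the number of buyers in state $0$ only increases; as $u$ is non-decreasing the expected state-$0$ utility weakly increases. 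The resulting policy is obedient and meets all three bullets, and since no step can raise the value above the optimum it remains optimal, completing the proof.

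I expect the most delicate step to be the revelation-principle reduction of the first paragraph: one must verify that garbling down to the recommended action preserves obedience in \emph{both} directions (the pooled buy-posterior stays above $\tau$ and the pooled no-buy-posterior stays below it) and leaves the sender's payoff exactly unchanged. By contrast, the two coupling arguments are routine once the obedience inequality $q_i^0\leq\frac{\mu(1-\tau)}{\tau(1-\mu)}q_i^1$ is in hand, since they only exploit monotonicity of $u$ and the freedom to choose the joint law in each state.
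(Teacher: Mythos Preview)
The paper does not actually prove this lemma; it is quoted from \citep{arieli2019private} and used as a black box in the proof of Proposition~\ref{pro:pbp}. So there is no in-paper argument to compare against.

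That said, your three-step transformation argument is sound and is essentially the standard way such a statement is derived. A few remarks confirming the places you flagged as potentially delicate:
\begin{itemize}
\item In the revelation-principle step, the only subtlety is the ``don't buy'' direction of obedience after pooling. You handle it correctly: the pooled posterior is a convex combination of posteriors each at most $\tau$, hence at most $\tau$. Note also that under the maintained assumption $\mu<\tau$ an agent cannot be buying at \emph{every} signal (her posteriors average to $\mu<\tau$), so the $s^i_0$ pool is nonempty and the division-by-zero issue in your Step~3 coupling never arises.
\item After setting $q_i^1=1$, the ``don't buy'' posterior becomes $0$, so both obedience constraints are trivially preserved; you implicitly use this but it is worth stating.
\item In the final coupling, the new policy satisfies the buy-obedience constraint with equality (posterior exactly $\tau$), relying on the usual tie-breaking convention that an indifferent agent follows the recommendation; this is the convention implicit in the paper's setup.
\end{itemize}
With these minor clarifications your argument is complete and matches the content of the cited result.
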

	
	By Lemma \ref{lem:jet} we deduce that the only two possible posteriors for an agent are $0$ and $\tau$. 
	Now we can use Corollary \ref{cor:binary}, which states that $\p\in\Delta(\{0,\frac{1}{n},\frac{2}{n},...,1\})$ is feasible if and only if $\p$ is a mean-preserving spread of  $\mu\delta_{1}+(1-\mu)\delta_{\frac{\mu(1-\tau)}{\tau(1-\mu)}}$, where $\p$ is the number of agents with posterior $\tau$. This set of mean-preserving spreads consists of all $\mu \delta_1 +(1-\mu) Q$ where $Q\in \Delta(\{0,\frac{1}{n},\frac{2}{n},...,1\})$ satisfy $\mathbb{E}[Q]=\frac{\mu(1-\tau)}{\tau(1-\mu)}$ and the utility of the sender for each such mean-preserving spread is $\mu u(1)+(1-\mu) \mathbb{E}_{x\sim Q} [u(x)]$, which is precisely the concavification formula of Equation \eqref{eq:conc}. 
\end{proof}
\bibliographystyle{plainnat}
\bibliography{bibfile}

\appendix
\section{Extending the Proof of Theorem \ref{theorem:main} to infinitely supported distributions}\label{app:trm-proof}

We have proved in Section \ref{sec:proof} that a  finitely supported feasible $\p$ must be a mean-preserving spread of $\p'$. Here we extend this result to arbitrary feasible distributions $\p$. 

Let $G$ be an information structure $G$ for $n$ agents where $\{S_i\}_{i=1}^n$ are (possibly infinite) measurable sets. Note that the measure $\p_G$ that is induced by $G$ can be weakly approximated using $\p_{G'}$ when $G'$ is a finite information structure. 
To do this, we use the following standard trick. First we partition $\Delta(\Omega)$ using a finite partition $\mathcal{D}_\delta$ such that each element $D\in\mathcal{D}_\delta$ is a convex set and has a diameter at most $\delta$.
We define the new signal space $S'_i=\mathcal{D}_\delta$ for every agent $i$.
We then couple together $G$ and $G'$ as follows. We draw a signal profile according to $G$ and tell each agent $i$ only of the partition element in $\mathcal{D}_\delta$ in which the posterior $\p(\omega=1|s_i)$ lies.
It is easy to see that $\p_{G'}$ weakly converges to $\p$ as $\delta$ goes to $0$. 

It now remains to show that if $\{\p_k\}_k\subset\Delta(\Delta_n(\Delta(\Omega)))$ is a sequence of measures with finite support that weakly converges to $\p$ such that $\p_k$ 
is a mean-preserving spread of $\p'$ for every $k$, then $\p$ also satisfies the this condition.

Note that $\p_k=\sum_{\omega \in \Omega}\mu_\omega \p^\omega_{k}$. Since $\Delta(\Delta_n(\Delta(\Omega))) $ is a weakly compact space we can assume (by taking subsequences if necessary) that all sequences $\{\p^\omega_{k}\}_k$ weakly converge to a measure $\p^\omega$ for every state $\omega\in\Omega$. Therefore, 
$\p=\sum_{\omega \in \Omega} \mu_\omega \p_{\omega}$. We further note that taking the expected measure is a weakly continuous operator from $\Delta(\Delta_n(\Delta(\Omega)))$ to $\Delta(\Delta(\Omega))$. Therefore, the sequences $\{\tp_k\}_k$ and $\{\tp^\omega_{k}\}_k$ converge to $\tp$ and $\tp^\omega$ respectively for every $\omega$. Therefore $\p$ is a mean-preserving spread of $\p'=\sum_{\omega \in \Omega} \mu_\omega \delta_{\tp^\omega}$ as desired.

Furthermore, since for every $k$ and $\omega\in\Omega$,  $$\mathrm d\tp^\omega_{k}(x)=\frac{1}{\mu_\omega}  x_\omega \mathrm d\tp_k(x)
,$$
it follows that in the limit for every $\omega\in\Omega$,
$$\mathrm d\tp^\omega(x)=\frac{1}{\mu_\omega}  x_\omega\mathrm d\tp(x),$$
as desired.


\section{Proof of Lemma \ref{lem:quantile}}\label{sec:bin}
The distribution $\bin(2m+1,\frac{1}{2})|_{\frac{1}{2}}$ assigns a probability of $\binom{2m+1}{k}2^{-2m}$ to $\frac{k}{2m+1}$ for every $k\leq m$, hence its expectation is given by
\begin{align}\label{eq:bin}
\begin{aligned}
	\mathbb{E}[\bin(2m+1,\frac{1}{2})|_{\frac{1}{2}}] &=\sum_{k=1}^m \binom{2m+1}{k}2^{-2m} \frac{k}{2m+1} =\frac{2^{-2m}}{2m+1}\sum_{k=1}^m \binom{2m+1}{k}k \\
	&= \frac{2^{-2m}}{2m+1} (2m+1)\sum_{k=0}^{m-1} \binom{2m}{k}=2^{-2m} \frac{1}{2} \left(2^{2m}- \binom{2m}{m} \right) \\
	&= \frac{1}{2} - \binom{2m}{m} 2^{-2m-1}
\end{aligned}
\end{align}
where the third equality follows from the fact that both expressions $\sum_{k=1}^m \binom{2m+1}{k}k$ and $(2m+1)\sum_{k=0}^{m-1} \binom{2m}{k}$ count the number of possible choices of a subset of at most $k$ people and their leader out of a group of $2m+1$ people. The first expression first chooses the group and thereafter the leader. The second expression first chooses the leader and thereafter complements it with a subset.

The distribution $\bin(2m,\frac{1}{2})|_{\frac{1}{2}}$ assigns a probability of $\binom{2m}{k}2^{-2m+1}$ to $\frac{k}{2m}$ for every $k<m$ and assigns a probability of $\binom{2m}{m}2^{-2m}$ to $\frac{1}{2}$, and hence its expectation is given by
\begin{align*}
\mathbb{E}[\bin(2m,\frac{1}{2})|_{\frac{1}{2}}] &=\sum_{k=1}^{m-1} \binom{2m}{k}2^{-2m+1} \frac{k}{2m} + \binom{2m}{m}2^{-2m-1} \\
& =\frac{2^{-2m+1}}{2m} \sum_{k=1}^{m-1} \binom{2m}{k} k   + \binom{2m}{m}2^{-2m-1} \\
&=\frac{2^{-2m+1}}{2m} 2m \sum_{k=0}^{m-2} \binom{2m-1}{k}   + \binom{2m}{m}2^{-2m-1} \\
&=2^{-2m+1} \frac{1}{2} \left( 2^{2m-1} - 2 \binom{2m-1}{m-1} \right)   + \binom{2m}{m}2^{-2m-1} \\
&=\frac{1}{2} - 2^{-2m+1}  \binom{2m-1}{m-1}  + \binom{2m}{m}2^{-2m-1} \\
&= \frac{1}{2} - \binom{2m}{m} 2^{-2m-1}
\end{align*}
where the argument for the third equality is similar to the third equality in Equation \eqref{eq:bin}.
\end{document}